\newcommand{\Oh}{\mathcal{O}}
\newcommand{\nca}{\mathsf{NCA}}
\newcommand{\degree}{\mathsf{deg}}
\newcommand{\inner}{\mathsf{inner}}
\newcommand{\weight}{\mathsf{weight}}
\newcommand{\nodelabel}{\ell}
\def\polylog{\operatorname{polylog}}
\newcommand*{\bdiv}{%
  \nonscript\mskip-\medmuskip\mkern5mu%
  \mathbin{\operator@font div}\penalty900\mkern5mu%
  \nonscript\mskip-\medmuskip
}
\newtheorem{theorem}{Theorem}
\newtheorem{lemma}[theorem]{Lemma}
\newtheorem{proposition}[theorem]{Proposition}
\title{Better Labeling Schemes for Nearest Common Ancestors\\ through Minor-Universal Trees}
\author[1,2]{Pawe{\l} Gawrychowski}
\author[2]{Jakub Łopuszański}
\affil[1]{University of Haifa, Israel}
\affil[2]{University of Wrocław, Poland}
\date{}
\begin{document}

\maketitle

\begin{abstract}
Preprocessing a tree for finding the nearest common ancestor of two nodes
is a basic tool with multiple applications. Quite a few linear-space constant-time
solutions are known and the problem seems to be well-understood.
This is however not so clear if we want to design a labeling scheme.
In this model, the structure should be
distributed: every node receives a distinct binary string, called its label, so that
given the labels of two nodes (and no further information about the topology
of the tree) we can compute the label of their nearest common ancestor.
The goal is to make the labels as short as possible. Alstrup, Gavoille,
Kaplan, and Rauhe [Theor. Comput. Syst. 37(3):441-456 2004] showed
that $\Oh(\log n)$-bit labels are enough, with a somewhat large constant.
More recently, Alstrup, Halvorsen, and Larsen [SODA 2014] refined this to
only $2.772\log n$, and provided a lower bound of $1.008\log n$.

We connect the question of designing a labeling
scheme for nearest common ancestors to the existence of a tree, called a minor-universal
tree, that contains every tree on $n$ nodes as a topological minor.
Even though it is not clear if a labeling scheme
must be based on such a notion, we argue that all already existing schemes
can be reformulated as such. Further, we show that this notion allows us to
easily obtain clean and good bounds on the length of the labels.
As the main upper bound, we show that $2.318\log n$-bit labels are enough.
Surprisingly, the notion of a minor-universal tree for binary trees
on $n$ nodes has been already used in a different context by Hrubes et al. [CCC 2010],
and Young, Chu, and Wong [J. ACM 46(3):416-435, 1999] introduced a very closely
related (but not equivalent) notion of a universal tree.
On the lower bound side, we show that any minor-universal tree for trees
on $n$ nodes must contain at least $\Omega(n^{2.174})$ nodes. This highlights a natural
limitation for all approaches based on defining a minor-universal tree.
Our lower bound technique also implies that a universal tree in the sense of Young
et al. must contain at least $\Omega(n^{2.185})$ nodes, thus dramatically improves their
lower bound of $\Omega(n\log n)$.
We complement the existential results with a generic
transformation that allows us, for any labeling scheme for nearest common
ancestors based on a minor-universal tree, to decrease the query time to
constant, while increasing the length of the labels only by lower order terms.
\end{abstract}

\thispagestyle{empty}
\clearpage
\setcounter{page}{1}

\section{Introduction}

A labeling scheme assigns a short binary string, called a label, to each node in a network, so that a function on two nodes
(such as distances, adjacency, connectivity, or nearest common ancestors) can be computed by
examining their labels alone. We consider designing such scheme for finding the nearest common
ancestor (NCA) of two nodes in a tree. More formally,
given the labels of two nodes of a rooted tree, we want to compute the label of their nearest
common ancestor (for this definition to make sense, we need to explicitly require that the labels
of all nodes in the same tree are distinct).

Computing nearest common ancestors is one of the basic algorithmic questions that one can
consider for trees. Harel and Tarjan~\cite{HarelT84} were the first to show how to preprocess
a tree using a linear number of words, so that the nearest common ancestor of any two nodes can be found
in constant time. Since then, quite a few simpler solutions have been found, such as the one
described by Bender and Farach-Colton~\cite{BenderF00}. See the survey by Alstrup et al.~\cite{AlstrupGKR02}
for a more detailed description of these solutions and related problems.

While constant query time and linear space might seem optimal, some important applications 
such as network routing require the structure to be distributed. That is, we might want to 
associate some information with every node of the tree, so that the nearest common ancestor
of two nodes can be computed using only their stored information. The goal is to distribute
the information as evenly as possible, which can be formalized by assigning a binary string, called a label, to
every node and minimizing its maximum length. This is then called a labeling scheme for nearest
common ancestors. Labeling schemes for multiple other queries in trees have been considered, 
such as distance~\cite{Peleg00,gavoille2004distance,alstrup2005labeling,alstrup2015distance,gavoille2007distributed},
adjacency~\cite{alstrup2015optimal,alstrup2002small,bonichon2007short}, ancestry~\cite{abiteboul2006compact,fraigniaud2010compact},
or routing~\cite{thorup2001compact}.
While we focus on trees, such questions make sense and have been considered also for more general classes of 
graphs~\cite{alstrup2005labeling,abiteboul2006compact,fischer2009short,fraigniaud2010compact,AHL14,alstrup2015adjacency,alstrup2015optimal,petersen2015near,alstrup2015distance,alstrup2016simpler,GawrychowskiKU16,GawrychowskiU16,AbboudGMW17,AlonN17,KatzKKP04}.
See~\cite{rotbart2016new} for a survey of these results.

Looking at the structure of Bender and Farach-Colton, converting it into a labeling scheme with short label is not trivial, as we
need to avoid using large precomputed tables that seem essential in their solution.
However, Peleg~\cite{Peleg05} showed how to assign a label consisting of $\Oh(\log^{2}n$) bits
to every node of a tree on $n$ nodes, so that given the labels of two nodes we can return
the predetermined name of their nearest common ancestor. He also showed that this
is asymptotically optimal. Interestingly, the situation changes quite dramatically if
we are allowed to design the names ourselves. That is, we want to assign a distinct name
to every node of a tree, so that given the names of two nodes we can find the name
of their nearest common ancestor (without any additional knowledge about the structure
of the tree). This is closely connected to the implicit representations of graphs considered
by Kannan et al.~\cite{Kannan}, except that in their case the query was adjacency.
Alstrup et al.~\cite{AlstrupGKR02} showed that, somewhat surprisingly, this is enough
to circumvent the lower bound of Peleg by designing a scheme using labels consisting
of $\Oh(\log n)$ bits. They did not calculate the exact constant, but later experimental
comparison by Fischer~\cite{fischer2009short} showed that, even after some tweaking, in the worst case it is around 8.
In a later paper, Alstrup et al.~\cite{AHL14} showed an NCA labeling scheme with labels of length 
$2.772\log n + \Oh(1)$\footnote{In this paper, $\log$ denotes the logarithm in base 2.} and proved that any such scheme needs labels of length at least
$1.008\log n - \Oh(1)$. The latter non-trivially improves an immediate lower bound of $\log n+\Omega(\log\log n)$
obtained from ancestry. They also presented an improved scheme for binary
trees with labels of length $2.585 \log n + \Oh(1)$.

The scheme of Alstrup et al.~\cite{AHL14} (and also all previous schemes) is based on the notion of
heavy path decomposition. For every heavy path, we assign a binary code to the root of each 
subtree hanging off it. The length of a code should correspond to the size of the subtree, so that
larger subtrees receive shorter codes. Then, the label of a node
is the concatenation of the codes assigned to the subtrees rooted at its light ancestors,
where an ancestor is light if it starts a new heavy path.
These codes need to be appropriately delimited, which makes the whole construction
(and the analysis) somewhat tedious if one is interested in optimizing the final
bound on the length.

\subsection{Our Results}

Our main conceptual contribution is connecting labeling schemes for nearest common ancestors to the
notion of minor-universal trees, that we believe to be an elegant approach for obtaining simple
and rather good bounds on the length of the labels, and in fact allows us to obtain significant
improvements.
It is well known that some labeling problems have a natural and clean connection
to universal trees, in particular these two views are known to be equivalent for adjacency~\cite{Kannan}
(another example is distance~\cite{FGNW16}, where the notion of a universal tree gives
a quite good but not the best possible bound). It appears that no such connection has
been explicitly mentioned in the literature for nearest common ancestors so far.
Intuitively, a minor-universal tree for trees on $n$ nodes, denoted $U_{n}$, is a rooted tree, such that the nodes of
any rooted tree $T$ on $n$ nodes can be mapped to the nodes of $U_{n}$ as to preserve the
NCA relationship. More formally, $T$ should be a topological minor of $U_{n}$,
meaning that $U_{n}$ should contain a subdivision of $T$ as a subgraph, or in other words
there should exists a mapping $f : T \rightarrow U_{n}$ such that $f(\nca(u,v))=\nca(f(u),f(v))$
for any $u,v\in T$.
This immediately implies
a labeling scheme for nearest common ancestors with labels of length $\log |U_{n}|$,
as we can choose the label of a node $u\in T$ to be the identifier of the node of $U_{n}$
it gets mapped to (in a fixed mapping), so small $U_{n}$ implies short labels.
In this case, it is not clear if a reverse connection holds. Nevertheless, all previously
considered labeling schemes for nearest common ancestors that we are aware of
can be recast in this framework.

The notion of a minor-universal tree has been independently considered before in different contexts.
Hrubes et al.~\cite{HrubesWY10} use it to solve a certain
problem in computational complexity, and construct a minor-universal tree of size
$n^{4}$ for all ordered binary trees on $n$ nodes (we briefly discuss how our
results relate to ordered trees in Appendix~\ref{sec:ordered}). Young et al.~\cite{YoungCW99} 
introduce a related (but not equivalent) notion of a universal tree, where instead
of a topological minor we are interested in minors that preserve the depth modulo
2, to study a certain question on boolean functions,
and construct such universal tree of size $\Oh(n^{2.376})$ for all trees on $n$ nodes.

Our technical contributions are summarized in Table~\ref{tbl:contribution}. 
The upper bounds are presented in Section~\ref{sec:upper} and should be compared
with the labeling schemes of Alstrup et al.~\cite{AHL14}, that imply a minor-universal tree of size 
$\Oh(n^{2.585})$ for binary trees, and $\Oh(n^{2.772})$ for general (without restricting the degrees)
trees, and $\Oh(n^{2.585})$, and the explicit construction
of a minor-universal tree of size $\Oh(n^{4})$ for binary trees given by Hrubes et al.~\cite{HrubesWY10}.
The lower bounds are described in Section~\ref{sec:lower}.
We are aware of no previously existing lower bounds on the size of a minor-universal tree, but
in Appendix~\ref{sec:universal} we show that our technique implies a lower
bound of $\Omega(n^{2.185})$ on the size of a universal tree in the sense of Young et al.~\cite{YoungCW99},
which dramatically improves their lower bound of $\Omega(n\log n)$.

\begin{figure}[b]
\begin{center}
\begin{tabular}{l c c}
\hline
{\bf Trees} & {\bf Lower bound} & {\bf Upper bound} \\
\hline
Binary & $\Oh(n^{1.728})$ & $\Omega(n^{1.894})$ \bigstrut[t] \\
General &  $\Oh(n^{2.174})$ & $\Omega(n^{2.318})$ \bigstrut[b] \\
\hline
\end{tabular}
\end{center}
\caption{Summary of the new bounds on the size of minor-universal trees.}
\label{tbl:contribution}
\end{figure}

The drawback of our approach is that a labeling scheme obtained through a minor-universal tree is not necessarily effective, as
computing the label of the nearest common ancestor might require inspecting the (large)
minor-universal tree. However, in Section~\ref{sec:convert} we show that this is, in fact, not an issue at all:
any labeling scheme for nearest common ancestors based on a minor-universal tree
with labels of length $c\log n$ can be converted into a scheme with labels of length $c\log n+o(\log n)$
and constant query time. This further strengthens our claim that minor-universal
trees are the right approach for obtaining a clean bound on the size of the labels,
at least from the theoretical perspective (of course, in practice the $o(\log n)$
term might be very large).

\subsection{Our Techniques}

Our construction of a minor-universal tree for binary trees is recursive and based on
a generalization of the heavy path decomposition. In the standard heavy path decomposition
of a tree $T$, the top heavy path starts at the root and iteratively descends to the child corresponding to the largest
subtree. Depending on the version, this either stops at a leaf, or at a node corresponding
to a subtree of size less than $|T|/2$. After some
thought, a natural idea is to introduce a parameter $\alpha$ and stop after reaching
a node corresponding to a subtree of size less than $\alpha \cdot |T|$.
This is due to a certain imbalance between the subtrees rooted at the
children of the node where we stop and all subtrees hanging off the top heavy path.
Our minor-universal tree for binary trees on $n$ nodes consists of a long path to which the top heavy
path of any $T$ consisting of $n$ nodes can be mapped, and recursively defined smaller
minor-universal trees for binary trees of appropriate sizes attached to the nodes of the
long path. Hrubes et al.~\cite{HrubesWY10} also follow the same high-level idea, but
work with the path leading to a centroid node. In their construction, there is only one
minor-universal tree for binary trees on $2/3n$ nodes attached to the last node of the
path. We attach two of them: one for binary trees on $\alpha\cdot n$ nodes and one
for binary trees on $n/2$ nodes. We choose the minor-universal trees attached to the
other nodes of the long path using the same reasoning as Hrubes et al.~\cite{HrubesWY10}
(which is closely connected to designing an alphabetical code with codewords of given
lengths used in many labeling papers, see for example ~\cite{thorup2001compact}),
except that we can use a stronger bound on the total size of all subtrees hanging
off the top path and not attached to its last node than the one obtained from the
properties of a centroid node. Finally, we choose $\alpha$ as to optimize the whole
construction. Very similar reasoning, that is, designing a decomposition strategy 
by choosing the top heavy path with a cut-off parameter $\alpha$
and then choosing $\alpha$ as to minimize the total size, has been also used by Young
et al.~\cite{YoungCW99}, except that their definition of a universal tree is not the same as our
minor-universal tree and they do not explicitly phrase their reasoning in terms of a heavy path
decomposition, which makes it less clear.

To construct a minor-universal tree for general trees on $n$ nodes we need to somehow
deal with nodes of large degree. We observe that essentially the same construction works
if we use the following standard observation: if we sort the children of the root of $T$ by the
size of their subtrees, then the subtree rooted at the $i$-th child is of size at most $|T|/i$.

To show a lower bound on the size of a minor-universal tree for binary trees on $n$ leaves,
we also apply a recursive reasoning. The main idea is to consider $s$-caterpillars, which are
binary trees on $s$ leaves and $s-1$ inner nodes. For every node $u$ in the minor-universal
tree we find the largest $s$, such that an $s$-caterpillar can be mapped to the subtree rooted at $u$.
Then, we use the inductive assumption to argue that there must be many such nodes,
because we can take any binary tree on $\lfloor n/s\rfloor$ leaves and replace each of its
leaves by an $s$-caterpillar. For general trees, we consider slightly more complex
gadgets, and in both cases need some careful calculations.

To show that any labeling scheme based on a minor-universal tree can be converted
into a labeling scheme with roughly the same label length and constant decoding time,
we use a recursive tree decomposition similar to the one used by Thorup and Zwick~\cite{thorup2001compact},
and tabulate all possible queries for tiny trees.

\section{Preliminaries}

We consider rooted trees, and we think that every edge is directed from a parent
to its child. Unless mentioned otherwise, the trees are unordered, that is,
the relative order of the children is not important.
$\nca(u,v)$ denotes the nearest common ancestor of $u$ and $v$
in the tree. $\degree(u)$ denotes the degree (number of children) of $u$.
A tree is binary if every node has at most two children. For a rooted tree $T$,
$|T|$ denotes its size, that is, the number of nodes. In most cases, this will be denoted by
$n$. The whole subtree rooted at node $u\in T$ is denoted by $T^{u}$. 
If we say that $T'$ is a subtree of $T$, we mean that $T'=T^{u}$ for some $u\in T$,
and if we say that $T'$ is a subgraph of $T$, we mean that $T'$ can be obtained
from $T$ by removing edges and nodes.

If $s$ is a binary string, $|s|$ denotes its length, and we write $s<_{lex}t$ when $s$ is lexicographically
less than $t$. $\epsilon$ is the empty string.

Let $\mathcal{T}$ be a family of rooted trees. An NCA labeling scheme for $\mathcal{T}$ consists
of an encoder and a decoder. The encoder takes a tree $T\in \mathcal{T}$ and assigns a distinct
label (a binary string) $\nodelabel(u)$ to every node $u\in T$. The decoder receives labels $\nodelabel(u)$ and
$\nodelabel(v)$, such that $u,v\in T$ for some $T\in \mathcal{T}$, and should return
$\nodelabel(\nca(u,v))$. Note that the decoder is not aware of $T$ and only knows that $u$
and $v$ come from the same tree belonging to $\mathcal{T}$. We are interested in minimizing
the maximum length of a label, that is, $\max_{T\in \mathcal{T}}\max_{u\in T} |\ell(u)|$.

\section{NCA Labeling Schemes and Minor-Universal Trees}
\label{sec:upper}

We obtain an NCA labeling scheme for a class $\mathcal{T}$ of rooted trees by defining a minor-universal
tree $T$ for $\mathcal{T}$. $T$ should be a rooted tree with the property that, for any $T'\in\mathcal{T}$,
$T'$ is a topological minor of $T$, meaning that a subdivision of $T'$ is a subgraph of $T$. In other
words, it should be possible to map the nodes of $T'$ to the nodes of $T$ as to preserve the NCA
relationship: there should exist a mapping $f : T' \rightarrow T$ such that $f(\nca(u,v))=\nca(f(u),f(v))$
for any $u,v\in T'$.
We will define a minor-universal tree for trees on $n$ nodes, denoted by $U_{n}$,
and a minor-universal tree for binary trees on $n$ nodes, denoted by $B_{n}$. Note that $B_{n}$
does no have to be binary.

A minor-universal tree $U_{n}$ (or $B_{n}$) can be directly translated into an NCA labeling scheme as follows.
Take a rooted tree $T$ on $n$ nodes. By assumption, there exists a mapping $f: T \rightarrow U_{n}$ 
such that $f(\nca(u,v))=\nca(f(u),f(v))$ for any $u,v\in T$ (if there are multiple such mappings, we fix
one). Then, we define an NCA labeling scheme by choosing, for every $u\in T$, the label
$\ell(u)$ to be the (binary) identifier of $f(u)$ in $U_{n}$. The maximum length of a label in the obtained scheme is
$\lceil\log |U_{n}|\rceil$. In the remaining part of this section we thus focus
on defining small minor-universal trees $B_{n}$ and $U_{n}$.

\subsection{Binary Trees}

Before presenting a formal definition of $B_{n}$, we explain the intuition.

Consider a binary rooted tree $T$. We first explain the (standard) notion of heavy path
decomposition. For every non-leaf $u\in T$, we choose the edge leading to its
child $v$, such that $|T^{v}|$ is the largest (breaking ties arbitrarily). We call $v$ the heavy child of $u$.
This decomposes the nodes of $T$ into node-disjoint heavy paths. The topmost
node of a heavy path is called its head. All existing NCA labeling schemes are
based on some version of this notion and assigning variable length codes to
the roots of all subtrees hanging off the heavy path, so that larger subtrees receive shorter
codes. Then, the label of a node is obtained by concatenating the codes of
all of its light ancestors, or in other words ancestors that are heads of their heavy
paths.
There are multiple possibilities for how to define the codes (and how to concatenate
them while making sure that the output can be decoded). 
Constructing such a code is closely connected to the following lemma
used by Hrubes et al.~\cite{HrubesWY10} to define a minor-universal tree for
ordered binary trees. The lemma can be also extracted (with some effort, as it is
somewhat implicit) from the construction of Young et al.~\cite{YoungCW99}.

\begin{lemma}[see Lemma 8 of~\cite{HrubesWY10}]
\label{lem:order}
Let $a_{N}$ be a sequence recursively defined as follows: $a_{1}=(1)$,
and $a_{N}= a_{\lfloor N/2 \rfloor} \oplus (N) \oplus a_{ \lfloor N/2 \rfloor}$,
where $\oplus$ denotes concatenation. Then, for any sequence $b=(b(1),b(2),\ldots,b(k))$
consisting of positive integers summing up to at most $N$, there
exists a subsequence $a'$ of $a_{N}$ that dominates $b$, meaning that
the $i$-th element of $a'$ is at least as large as the $i$-th element of $b$,
for every $i=1,2,\ldots,k$.
\end{lemma}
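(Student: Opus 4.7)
The plan is to prove the lemma by induction on $N$. The base case $N=1$ is immediate: any sequence $b$ of positive integers summing to at most $1$ is either empty or $(1)$, and both are trivially dominated by $a_1 = (1)$.

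For the inductive step, assume the claim holds for all values smaller than $N$, and let $b = (b(1), \ldots, b(k))$ be a sequence of positive integers with $\sum_i b(i) \le N$. The strategy is to split $b$ at a well-chosen index $j$, match $b(j)$ against the middle element $N$ of $a_N$, and apply induction to the prefix and suffix inside the two recursive copies of $a_{\lfloor N/2 \rfloor}$. Concretely, if the total sum is already at most $\lfloor N/2 \rfloor$, the entire sequence $b$ can be handled recursively inside the left copy of $a_{\lfloor N/2 \rfloor}$. Otherwise, let $j$ be the smallest index with $b(1) + \cdots + b(j) > \lfloor N/2 \rfloor$; by minimality, the prefix $b(1), \ldots, b(j-1)$ sums to at most $\lfloor N/2 \rfloor$, and we match $b(j) \le N$ against the explicit middle element $N$ in $a_N$.

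The part that takes a little care is showing that the suffix $b(j+1), \ldots, b(k)$ also sums to at most $\lfloor N/2 \rfloor$, so that the inductive hypothesis applies to it inside the right copy of $a_{\lfloor N/2 \rfloor}$. Here one uses both $\sum_i b(i) \le N$ and the fact that $b(j) \ge 1$ because the entries are positive integers: the suffix sum is at most $N - (b(1) + \cdots + b(j)) + 0 \le N - \lfloor N/2 \rfloor - 1 = \lceil N/2 \rceil - 1$, which equals $\lfloor N/2 \rfloor$ when $N$ is odd and is less than $\lfloor N/2 \rfloor$ when $N$ is even. This is the one place where integrality is really used, and it is the main (though modest) obstacle in the argument.

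Concatenating the dominating subsequence for the prefix obtained from the left $a_{\lfloor N/2 \rfloor}$, the middle element $N$, and the dominating subsequence for the suffix obtained from the right $a_{\lfloor N/2 \rfloor}$ yields a subsequence of $a_N$ that dominates $b$ coordinate-wise, completing the induction.
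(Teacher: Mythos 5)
Your argument is correct. The paper itself does not reproduce a proof of this lemma but simply cites Lemma~8 of Hrubes, Wigderson, and Yehudayoff; the inductive argument you give — split $b$ at the first index $j$ whose prefix sum exceeds $\lfloor N/2\rfloor$, match $b(j)\le N$ against the explicit middle element, and use integrality of the entries to push the suffix sum down to at most $\lceil N/2\rceil - 1 \le \lfloor N/2\rfloor$ so that both halves fall under the inductive hypothesis — is the standard proof of this fact and is essentially the one in the cited reference. One minor stylistic remark: the ``$+\,0$'' term in your suffix bound is distracting; what you really use is that integrality turns the strict inequality $b(1)+\cdots+b(j) > \lfloor N/2\rfloor$ into $b(1)+\cdots+b(j) \ge \lfloor N/2\rfloor + 1$, which then gives the desired bound on the remaining sum.
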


%\begin{proof}
%We prove the lemma by induction on $N$. The claim clearly holds for $N=1$.
%
%Consider a sequence $b=(b(1),b(2),\ldots,b(k))$, such that $\sum_{i}b(i)\leq N$
%and $b(i) > 0$ for all $i=1,2,\ldots,k$. To define the subsequence of $a_{N}$ that
%dominates $b$, we choose the largest $j\leq k$, such that $\sum_{i\leq j} b_{i} \leq \lfloor N/2\rfloor$.
%Then, by the inductive assumption, $a_{\lfloor N/2 \rfloor}$ dominates
%$(b(1),b(2),\ldots,b(j))$. If $j=k$ we are done, and otherwise $b(j+1) \leq N$
%and $a_{\lfloor N/2 \rfloor}$ dominates $(b(j+2),b(j+3),\ldots,b(k))$.
%In all cases, by using the inductive assumption we obtain that a subsequence
%of $a_{N}$ dominates $b$. 
%\end{proof}

The sequence defined in Lemma~\ref{lem:order} contains 1 copy of $N$,
2 copies of $\lfloor N/2\rfloor $, 4 copies of $\lfloor N/4\rfloor $, and so on.
In other words, there are $2^{i}$ copies of $\lfloor N/2^{i}\rfloor$, for every $i=0,1,\ldots,\lfloor \log N\rfloor $
there.

To present our construction of a minor-universal binary tree we need to modify the notion of heavy path decomposition.
Let $\alpha\in (1/2,1)$ be a parameter to be fixed later. We define $\alpha$-heavy path decomposition
as follows. Let $T$ be a rooted tree. We start at the root of $T$
and, as long as possible, keep descending to the (unique) child $v$ of the current node $u$, such
that $|T^{v}| \geq \alpha\cdot |T|$. This defines the top $\alpha$-heavy path.
Then, we recursively decompose every subtree hanging off the top $\alpha$-heavy
path.

Now we are ready to present our construction of the minor-universal binary tree $B_{n}$,
that immediately implies an improved nearest common ancestors labeling scheme 
for binary trees on $n$ nodes as explained in the introduction. $B_{0}$ is the empty tree
and $B_{1}$ consists of a single node.
For $n\geq 2$ the construction is recursive.
We invoke Lemma~\ref{lem:order} with $N=\lfloor(1-\alpha)n\rfloor$ to obtain a sequence
$a_{\lfloor(1-\alpha)n\rfloor}=(a(1),a(2),\ldots,a(k))$. Then, $B_{n}$ consists of a path
$u_{1}-u_{2}-\ldots -u_{k+1}$. We attach a copy of $B_{a(i)-1}$ to every
$u_{i}$, for $i=1,2,\ldots,k$. Additionally, we attach a copy of
$B_{\lfloor\alpha\cdot n\rfloor}$ and $B_{\lfloor(n-1)/2\rfloor}$ to $u_{k+1}$. 
See Figure~\ref{fig:binary}.
Note that $a(i)-1<n$, $\lfloor \alpha \cdot n \rfloor < n$ for $\alpha < 1$, and
$\lfloor (n-1)/2 \rfloor < n$, so this is indeed a valid recursive definition.
We claim that $B_{n}$ is a minor-universal tree for all binary
trees on $n$ nodes.

\begin{figure}[t]
\begin{center}
\includegraphics[scale=0.5]{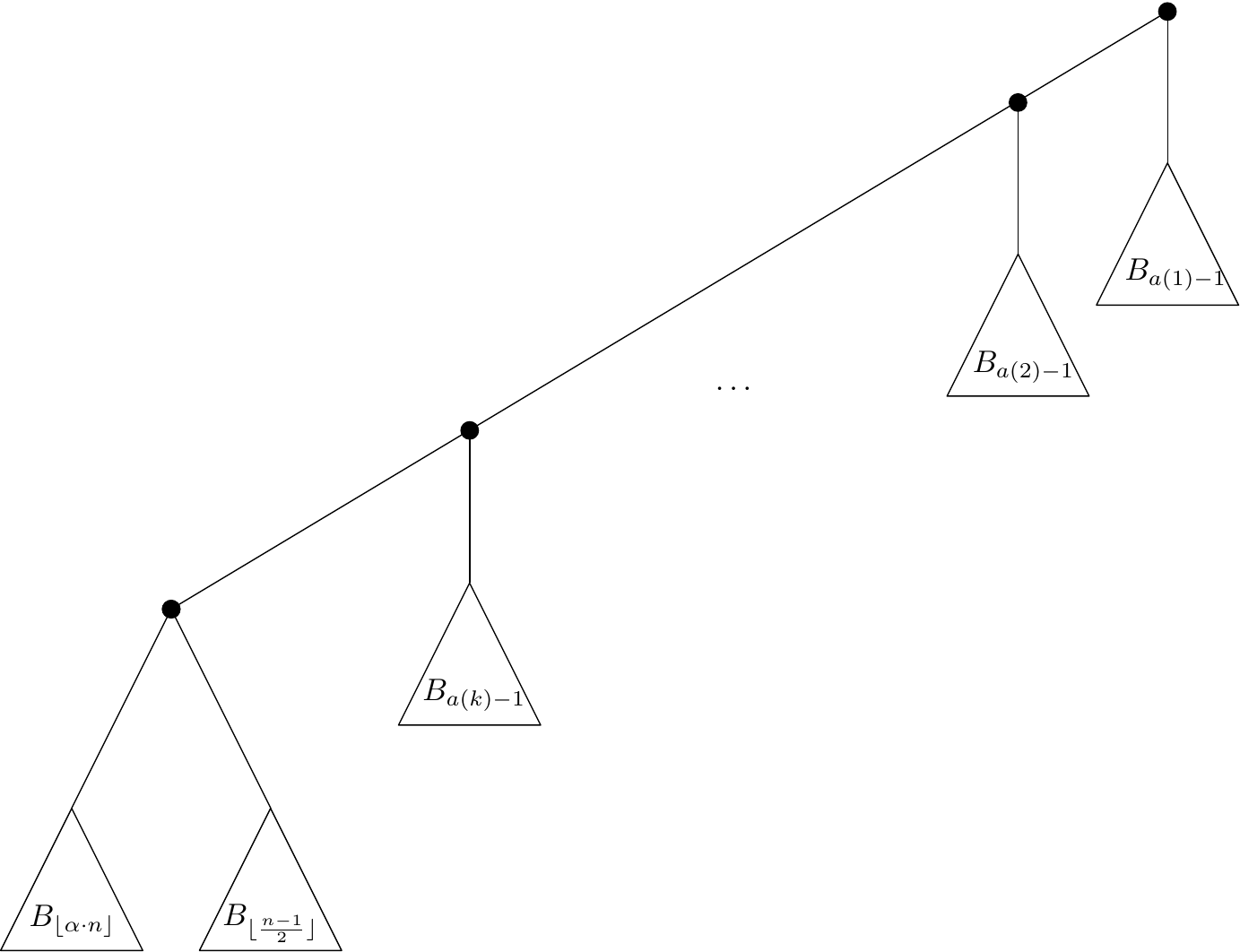}
\end{center}
\caption{A schematic illustration for the recursive construction of $B_{n}$.}
\label{fig:binary}
\end{figure}

\begin{lemma}
\label{lem:binary}
For any binary tree $T$ on $n$ nodes, $B_{n}$ contains a subgraph isomorphic
to a subdivision of $T$.
\end{lemma}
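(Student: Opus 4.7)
The proof will proceed by induction on $n$, with $B_0$ (empty) and $B_1$ (single node) handling the base cases trivially. For the inductive step with $n\geq 2$, given a binary tree $T$ on $n$ nodes, the plan is to take its top $\alpha$-heavy path $v_1,v_2,\ldots,v_m$ and route it along the spine $u_1,u_2,\ldots,u_{k+1}$ of $B_n$, pinning $v_m$ to $u_{k+1}$ and choosing positions for $v_1,\ldots,v_{m-1}$ using Lemma~\ref{lem:order}.

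The key accounting step is to define, for each $j<m$, the \emph{slice size}
\[
b(j) \;=\; |T^{v_j}| - |T^{v_{j+1}}|,
\]
i.e.\ the contribution of $v_j$ itself together with the (at most one) subtree hanging off its non-heavy child. Telescoping yields $\sum_{j=1}^{m-1} b(j) = n - |T^{v_m}| \leq (1-\alpha)n$, and since the left-hand side is an integer it is in fact at most $N=\lfloor(1-\alpha)n\rfloor$. Applying Lemma~\ref{lem:order} to $(b(1),\ldots,b(m-1))$ then produces indices $i_1<i_2<\cdots<i_{m-1}$ in $\{1,\ldots,k\}$ with $a(i_j)\geq b(j)$ for every $j$.

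With these indices in hand, I would define $f:T\to B_n$ by setting $f(v_j)=u_{i_j}$ for $j<m$ and $f(v_m)=u_{k+1}$, realising each heavy-path edge $v_j v_{j+1}$ as the subdivided spine segment $u_{i_j},u_{i_j+1},\ldots,u_{i_{j+1}}$ (with $i_m:=k+1$). For each intermediate $v_j$ whose non-heavy child $c_j$ exists, I send $c_j$ to the root of the copy of $B_{a(i_j)-1}$ attached to $u_{i_j}$; since $|T^{c_j}|=b(j)-1\leq a(i_j)-1$, the inductive hypothesis embeds $T^{c_j}$ inside that copy. At $v_m$, order its children so $|T^{c_1}|\geq|T^{c_2}|$: both sizes are strictly below $\alpha n$, which gives $|T^{c_1}|\leq \lfloor\alpha n\rfloor$, while the averaging inequality $2|T^{c_2}|\leq|T^{v_m}|-1\leq n-1$ yields $|T^{c_2}|\leq\lfloor(n-1)/2\rfloor$. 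By induction, $T^{c_1}$ and $T^{c_2}$ fit into the two copies $B_{\lfloor\alpha n\rfloor}$ and $B_{\lfloor(n-1)/2\rfloor}$ attached to $u_{k+1}$.

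What remains is to check that $f$ really extends to a subdivision of $T$ inside $B_n$. This boils down to a disjointness verification: the spine segment used for any heavy-path edge, the attached copy receiving each non-heavy branch, and the two copies hanging at $u_{k+1}$ live in pairwise disjoint subtrees of $B_n$ above their respective mount points, so the induced paths are internally vertex-disjoint and share no vertices with $f(V(T))$ other than their endpoints. The main pitfall here, and the only step that is not purely mechanical, is the slice-based accounting: defining $b(j)$ as $|T^{v_j}|-|T^{v_{j+1}}|$ rather than as $|T^{c_j}|$ absorbs exactly the $-1$ in the size of the attached $B_{a(i_j)-1}$, so that Lemma~\ref{lem:order}'s domination $a(i_j)\geq b(j)$ matches precisely the requirement $|T^{c_j}|\leq a(i_j)-1$ needed by the induction hypothesis.
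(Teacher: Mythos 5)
Your proposal is correct and takes essentially the same approach as the paper: decompose $T$ along its top $\alpha$-heavy path, invoke Lemma~\ref{lem:order} to place the off-path branches into the $B_{a(i)-1}$ copies on the spine, and use the inductive hypothesis for the two subtrees hanging at the bottom node. Your telescoping definition $b(j)=|T^{v_j}|-|T^{v_{j+1}}|$ is identical in value to the paper's $b(i)$ (one plus the size of the hanging subtree), and the bounds $|T^{c_1}|\leq\lfloor\alpha n\rfloor$, $|T^{c_2}|\leq\lfloor(n-1)/2\rfloor$ are exactly the paper's.
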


\begin{proof}
We prove the lemma by induction on $n$.

Consider a binary tree $T$ on $n\geq 2$ nodes and let $v_{1} - v_{2} - \ldots - v_{s}$ be the
path starting at the root in the $\alpha$-heavy path decomposition of $T$. Then,
$|T^{v_{s}}| \geq \alpha \cdot n$, but for every child $u$ of $v_{s}$ we have that
$|T^{u}| < \alpha \cdot n$. Consequently, the total size of all subtrees hanging off the
path and attached to $v_{1},v_{2},\ldots,v_{s-1}$, increased by $s-1$, is at most $(1-\alpha)n$.
Also, denoting by $u_{1}$ and $u_{2}$ the children of $v_{s}$ and ordering them so that $|T^{u_{1}}| \geq |T^{u_{2}}|$,
we have $|T^{u_{1}}| < \alpha \cdot n$ and $|T^{u_{2}}| \leq (n-1)/2$ (we assume that $v_{s}$ has
two children, otherwise we can think that the missing children are of size 0). Then,
by the inductive assumption,
a subdivision of $T^{u_{1}}$ is a subgraph of $B_{\lfloor\alpha\cdot n\rfloor}$,
and a subdivision of $T^{u_{2}}$ is a subgraph of $B_{\lfloor(n-1)/2\rfloor}$.
Further, denoting by $b(i)-1$ the size of the subtree hanging off the path and attached to $v_{i}$,
we have $\sum_{i=1}^{s-1} b(i) \leq m$, where $m=\lfloor (1-\alpha)n\rfloor$, and every $b(i)$ is positive, so
$b$ is dominated by a subsequence of $a_{m}=(a(1),a(2),\ldots,a(k))$. This means that we can
find indices $1\leq j(1) < j(2) < \ldots < j(s-1)\leq k$, such that $b(i) \leq a(j(i))$, for every
$i=1,2,\ldots,s-1$. But then a subdivision of the subtree hanging off the path attached
to $v_{i}$ is a subgraph of $B_{a(j(i))-1}$ attached to $u_{j(i)}$ in $B_{n}$.
Together, all these observations imply that a subdivision of the
whole $T$ is a subgraph of $B_{n}$.
\end{proof}

Finally, we analyze the size of $B_{n}$. Because $B_{n}$ consists of a copy of
$B_{\lfloor\alpha\cdot n\rfloor}$, a copy of $B_{\lfloor (n-1)/2\rfloor}$, and $2^{i}$ copies of $B_{\lfloor (1-\alpha)n/2^{i}\rfloor-1}$
attached to the path, for every $i=0,1,\ldots,\lfloor\log(1-\alpha)n\rfloor$, we have
the following recurrence:
\begin{align*}
|B_{n}| &= 1 + |B_{\lfloor\alpha\cdot n\rfloor }| + |B_{\lfloor (n-1)/2\rfloor}| + \sum_{i=0}^{\lfloor\log(1-\alpha)n\rfloor} 2^{i} \cdot (1+|B_{\lfloor (1-\alpha)n/2^{i}\rfloor-1}|) 
\end{align*}
We want to inductively prove that $|B_{n}|\leq n^{c}$, for some (hopefully small) constant $c>1$. To this end,
we introduce a function $b(x)=x^{c}$ that is defined for any real $x$, and try to show that
$|B_{n}|\leq b(n)$ by induction on $n$.
Using the inductive assumption, $1+|B_{m}| \leq b(m+1)$ holds for any $m<n$ by applying the Bernoulli's inequality
and checking $m=0$ separately. We would also like to use $1+|B_{\lfloor (n-1)/2\rfloor}| \leq b(n/2)$
for $n\geq 2$, which requires additionally verifying that $1+k^{c} \leq (k+1/2)^{c}$ for
any $k\geq 1$. For $c>1.71$, this holds for $k\geq 2$ by the Bernoulli's inequality and can
be checked for $k=1$ separately. These inequalities allow us to upper bound $|B_{n}|$ as follows:
\begin{align*}
|B_{n}| &\leq b(\alpha\cdot n) + b(n/2) + \sum_{i\geq 0} 2^{i} \cdot b((1-\alpha)n/2^{i})
\end{align*}
To conclude that indeed $|B_{n}| \leq b(n)$, it suffices that the following inequality holds:
\begin{align*}
\alpha^{c} + (1/2)^{c} + \sum_{i\geq 0} 2^{i} ((1-\alpha)/2^{i})^{c} &\leq 1 \\
\alpha^{c} + (1/2)^{c} + (1-\alpha)^{c} \sum_{i\geq 0} (1/2^{c-1})^{i} &\leq 1 \\
\alpha^{c} + (1-\alpha)^{c} \cdot 2^{c-1} / (2^{c-1}-1) &\leq 1-(1/2)^{c}
\end{align*}
Minimizing $f(x)=x^{c}+(1-x)^{c} \cdot 2^{c-1} / (2^{c-1}-1)$ we obtain
$x=A/(1+A)$, where $A=(2^{c-1}/(2^{c-1}-1))^{1/(c-1)}$. Thus, it is enough
that $(A/(1+A))^{c}+(1/A)^{c}\cdot 2^{c-1} / (2^{c-1}-1) \leq 1-(1/2)^{c}$.
This can be solved numerically for the smallest possible $c$ and verified to
hold for $c=1.894$ by choosing $A=2.372$ and $\alpha=0.704$.

\subsection{General Trees}

To generalize the construction to non-binary trees, we use the same notion of
$\alpha$-heavy path decomposition. Again, we invoke
Lemma~\ref{lem:order} with $N=(1-\alpha)n$ to obtain a sequence
$a_{\lfloor (1-\alpha)n\rfloor}=(a(1),a(2),\ldots,a(k))$. $U_{n}$ consists of a path
$u_{1}-u_{2}-\ldots -u_{k+1}$. For every $i=1,2,\ldots,k$,
we attach a copy of $U_{a(i)-1}$ and, for every $j\geq 2$,
a copy of $U_{\lfloor a(i)/j\rfloor}$ to $u_{i}$.
Additionally, we attach a copy of
$U_{\lfloor \alpha\cdot n\rfloor}$ to $u_{k+1}$ and also, for every $j\geq 2$,
a copy of $U_{\lfloor (n-1)/j\rfloor}$. See Figure~\ref{fig:general}.
We claim that $U_{n}$ is indeed a minor-universal tree for all 
trees on $n$ nodes.

\begin{figure}[t]
\begin{center}
\includegraphics[scale=0.5]{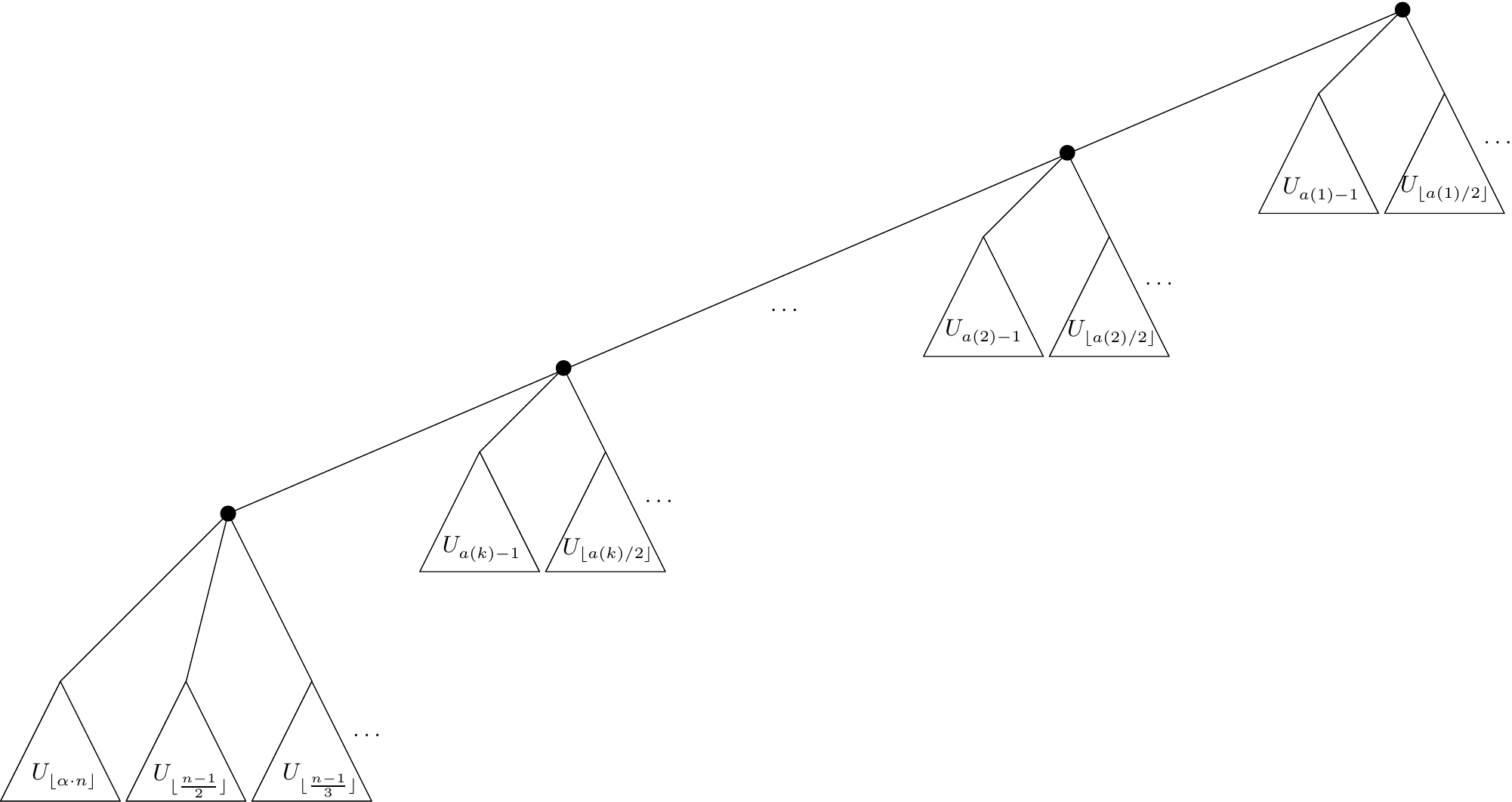}
\end{center}
\caption{A schematic illustration for the recursive construction of $U_{n}$.}
\label{fig:general}
\end{figure}

\begin{lemma}
\label{lem:general}
For any tree $T$ on $n$ nodes, $U_{n}$ contains a subgraph isomorphic
to a subdivision of $T$.
\end{lemma}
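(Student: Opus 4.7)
The plan is to induct on $n$ following the template of the proof of Lemma~\ref{lem:binary}. Given $T$ on $n\geq 2$ nodes, let $v_1-v_2-\cdots-v_s$ denote the top $\alpha$-heavy path of $T$. I would map $v_s$ to $u_{k+1}$ and each $v_i$ (for $i<s$) to $u_{j(i)}$ for indices $1\leq j(1)<\cdots<j(s-1)\leq k$ chosen below; the edges of $T$ along this path become paths in $U_n$, providing the required subdivision. Two families of subtrees then have to be recursively embedded into the gadgets attached to the path: the subtrees rooted at the children of $v_s$ (into the gadgets hanging off $u_{k+1}$), and the non-heavy subtrees hanging off the interior nodes $v_1,\ldots,v_{s-1}$ (into the gadgets hanging off the corresponding $u_{j(i)}$).

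For $v_s$: by definition of the $\alpha$-heavy path, every child of $v_s$ roots a subtree of size strictly less than $\alpha n$. Sort the children so that their subtree sizes satisfy $s_1\geq s_2\geq\cdots\geq s_{d_s}$. Integrality together with $s_1<\alpha n$ yields $s_1\leq\lfloor\alpha n\rfloor$, and the standard averaging inequality $j\cdot s_j\leq s_1+\cdots+s_j\leq|T^{v_s}|-1\leq n-1$ gives $s_j\leq\lfloor(n-1)/j\rfloor$ for $j\geq 2$. Hence, by the inductive hypothesis, the largest such subtree embeds as a subdivision into the copy of $U_{\lfloor\alpha n\rfloor}$ attached to $u_{k+1}$, and for $j\geq 2$ the $j$-th embeds into the copy of $U_{\lfloor(n-1)/j\rfloor}$ attached to $u_{k+1}$.

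For the interior nodes: at each $v_i$ with $i<s$, let $t_{i,1}\geq\cdots\geq t_{i,d_i}$ be the sizes of its non-heavy subtrees and $T_i=\sum_{j} t_{i,j}$. Since the nodes of the path together with all non-heavy subtrees account for exactly $n-|T^{v_s}|$ nodes, we get $\sum_{i=1}^{s-1}(T_i+1)=n-|T^{v_s}|\leq(1-\alpha)n$, so $b(i):=T_i+1$ is a sequence of positive integers summing to at most $\lfloor(1-\alpha)n\rfloor$. Applying Lemma~\ref{lem:order} supplies indices $1\leq j(1)<\cdots<j(s-1)\leq k$ with $b(i)\leq a(j(i))$, equivalently $T_i\leq a(j(i))-1$. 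The largest non-heavy subtree at $v_i$ has size $t_{i,1}\leq T_i\leq a(j(i))-1$ and embeds into the attached $U_{a(j(i))-1}$ by induction; for $j\geq 2$, the bound $j\cdot t_{i,j}\leq T_i\leq a(j(i))$ combined with integrality gives $t_{i,j}\leq\lfloor a(j(i))/j\rfloor$, so the $j$-th largest embeds into the attached $U_{\lfloor a(j(i))/j\rfloor}$.

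The only genuine departure from the binary case is degree handling: the single subtree hanging off $v_i$ is replaced by the total non-heavy mass $T_i$ when invoking Lemma~\ref{lem:order}, and the averaging bound $t_{i,j}\leq T_i/j$ (with its analogue at $v_s$) is what buys us enough room in the $U_{\lfloor a(j(i))/j\rfloor}$ and $U_{\lfloor(n-1)/j\rfloor}$ gadgets to accommodate every non-heavy child. Degenerate cases — $v_s$ being a leaf, or $v_i$ carrying no non-heavy subtree so that $b(i)=1$ — are absorbed automatically, and the most delicate bookkeeping is the integrality step in the inequalities above, which however is straightforward since the $t_{i,j}$ and $s_j$ are integers bounded by rational quantities.
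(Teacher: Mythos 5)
Your proof is correct and follows the same approach as the paper's: take the top $\alpha$-heavy path, bound the non-heavy mass at each interior path node by $(1-\alpha)n$ so that Lemma~\ref{lem:order} supplies dominating entries of $a_{\lfloor(1-\alpha)n\rfloor}$, use the averaging inequality (the $j$-th largest of a family summing to $m$ has size at most $m/j$) to route the non-heavy children of each $v_i$ into the gadgets $U_{a(j(i))-1}, U_{\lfloor a(j(i))/2\rfloor}, \ldots$ and the children of $v_s$ into $U_{\lfloor\alpha n\rfloor}, U_{\lfloor(n-1)/2\rfloor}, \ldots$, and close by induction. The only cosmetic differences from the paper are that you spell out the integrality steps and the bookkeeping $\sum_{i<s}(T_i+1)=n-|T^{v_s}|$ more explicitly.
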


\begin{proof}
The proof is very similar to the proof of Lemma~\ref{lem:binary}.

Consider a  tree $T$ on $n$ nodes and let $v_{1} - v_{2} - \ldots - v_{s}$ be the
path starting at the root in the $\alpha$-heavy path decomposition of $T$. Again,
the total size of all subtrees hanging off the path and attached to $v_{1},v_{2},\ldots,v_{s-1}$,
increased by $s-1$, is less than $(1-\alpha)n$, and if we denote by $u_{1},u_{2},u_{3},\ldots$ the children of $v_{s}$
and order them so that $|T^{u_{1}}| \geq |T^{u_{2}}| \geq |T^{u_{3}}| \geq \ldots$
then $|T^{u_{1}}| < \alpha \cdot n$ and $|T^{u_{j}}| \leq (n-1)/j$ for every $j\geq 2$.
Then, a subdivision of $T^{u_{1}}$ is a subgraph of $U_{\lfloor\alpha\cdot n\rfloor}$,
and a subdivision of $T^{u_{j}}$ is a subgraph of $U_{\lfloor(n-1)/j\rfloor}$,
for every $j=2,3,\ldots$.
Denoting by $b(i)-1$ the total size of all subtrees hanging off the path and attached
to $v_{i}$, we can find indices $1\leq j(1) < j(2) < \ldots < j(s-1)$, such that $b(i) \leq a(j(i))$, for every
$i=1,2,\ldots,s-1$, where $a_{\lfloor (1-\alpha)n\rfloor}=(a(1),a(2),\ldots,a(k))$.
Let $v(i,1),v(i,2),\ldots$ be the children of
$v_{i}$ ordered so that $|T^{v(i,1)}| \geq |T^{v(i,2)}| \geq \ldots$.
Then a subdivision of $T^{v(i,1)}$ is a subgraph of $U_{a(j(i))-1}$ attached
to $u_{j(i)}$ in $U_{n}$ and, for every $k\geq 2$,
a subdivision of $T^{v(i,k)}$ is a subgraph of $U_{\lfloor a(j(i))/k\rfloor}$
attached to the same $u_{j(i)}$ in $U_{n}$.
This all imply that a subdivision of the whole $T$ is a subgraph of $U_{n}$.
\end{proof}

To analyze the size of $U_{n}$, observe that $|U_{n}|$ can be bounded by
\begin{align*}
1 + |U_{\lfloor\alpha\cdot n\rfloor }| + \sum_{i=2}^{n-1}|U_{\lfloor (n-1)/i\rfloor}| + \sum_{i=0}^{\lfloor\log (1-\alpha)n\rfloor} 2^{i} \cdot (1+|U_{\lfloor (1-\alpha)n/2^{i}\rfloor-1}|+\sum_{j=2}^{\lfloor (1-\alpha)n/2^{i}\rfloor}|U_{\lfloor (1-\alpha)n/(2^{i}\cdot j)\rfloor}|) 
\end{align*}
We want to inductively prove that $|U_{n}| \leq s(n)$, where $s(n)=n^{c}$, for some constant $c>1$.
By the same reasoning as the one used to bound $|B_{n}|$:
\begin{align*}
|U_{n}| &\leq s(\alpha\cdot n) + \sum_{i\geq 2}s(n/i) + \sum_{i\geq 0} 2^{i} \sum_{j\geq 1} s((1-\alpha)n/(2^{i}\cdot j))
\end{align*}
For the inductive step to hold, it suffices that:
\begin{align*}
\alpha^{c} + \sum_{i\geq 2}(1/i)^{c} + \sum_{i\geq 0} 2^{i} \sum_{j\geq 1}((1-\alpha)/(2^{i}\cdot j))^{c} &\leq 1 \\
\alpha^{c} + \sum_{i\geq 1}(1/i)^{c} + \sum_{i\geq 0} 2^{i} ((1-\alpha)/2^{i})^{c}\cdot \zeta(c) &\leq 2 \\
\alpha^{c} +(1-\alpha)^{c} \cdot \zeta(c) 2^{c-1} / (2^{c-1}-1) &\leq 2-\zeta(c)
\end{align*}
where $\zeta(c)=\sum_{i=1}^{\infty} 1/i^{c}$ is the standard Riemann zeta function. 
Recall that our goal is to make $c$ as small as possible, and we can adjust $\alpha$. By
approximating $\zeta(c)$, we can verify that, after choosing $\alpha=0.659$,
the above inequality holds for $c=2.318$.

\section{Lower Bound for Minor-Universal Trees}
\label{sec:lower}

In this section, we develop a lower bound on the number of nodes in a minor-universal
tree for binary trees on $n$ nodes, and a minor-universal tree for general trees on $n$
nodes. In both cases, it is convenient to lower bound the number of leaves in a tree,
that contains as a subgraph a subdivision of any binary tree (or a general tree) $T$,
such that $T$ contains $n$ leaves and no degree-1 nodes. This is denoted by
$b(n)$ and $u(n)$, respectively. Because we do not allow degree-1 nodes in $T$,
it has at most $2n-1$ nodes, thus $b(n)$ is a lower bound on the size of a minor-universal tree for binary trees on
$2n-1$ nodes, and similarly $u(n)$ is a lower bound on the size of a minor-universal
tree for general trees on $2n-1$ nodes.

\subsection{Binary Trees}

We want to obtain a lower bound on the number of leaves $b(n)$ in a tree,
that contains as a subgraph a subdivision of any binary tree $T$ on $n$ leaves
and no degree-1 nodes.

\begin{lemma}
\label{lem:lowerbinary}
$b(n) \geq 1+\sum_{s\geq 2} b(\lfloor n/s \rfloor).$
\end{lemma}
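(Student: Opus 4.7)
The plan is to fix an arbitrary tree $U$ that contains, as a topological minor, every binary tree on $n$ leaves with no degree-$1$ nodes, and to show $U$ has at least $1+\sum_{s\geq 2}b(\lfloor n/s\rfloor)$ leaves. Suppressing degree-$1$ internal nodes of $U$ does not change its leaf count and does not alter which trees embed into it as topological minors, so I may assume every internal node of $U$ has at least two children. For each node $u\in U$, let $\sigma(u)$ be the largest $s$ such that $U^u$ contains a subdivision of the $s$-caterpillar. Using the fact that the $s$-caterpillar arises from an $(s-1)$-caterpillar by prepending a new root with one extra leaf child, a short case analysis yields $\sigma(u)=1$ when $u$ is a leaf of $U$ and $\sigma(u)=1+\max_{c}\sigma(c)$ when $u$ is internal, with the maximum ranging over $u$'s children; in particular $\sigma$ is monotone non-increasing from root to leaves.

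For each $s\geq 2$ set $V_s=\{u\in U:\sigma(u)\geq s\}$, which is upward-closed, and let $U_s$ denote the induced subtree $U[V_s]$. The central claim is that $U_s$ is itself minor-universal for binary trees on $\lfloor n/s\rfloor$ leaves with no degree-$1$ nodes. To prove it, take any such tree $T'$, glue a fresh $s$-caterpillar at each of its $\lfloor n/s\rfloor$ leaves to obtain a binary tree with no degree-$1$ nodes and $s\cdot\lfloor n/s\rfloor\leq n$ leaves, and if necessary inflate one leaf to reach exactly $n$ leaves. By hypothesis $U$ contains a subdivision of this composite tree, and the images of the caterpillar-roots -- which are exactly the leaves of $T'$ -- lie in $V_s$ by the very definition of $\sigma$. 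Since every edge of $T'$ becomes a path in $U$ between a node and one of its ancestors, and $V_s$ is upward-closed, the whole image of $T'$ sits inside $V_s$. Hence $U_s$ contains a subdivision of $T'$, and consequently $U_s$ must have at least $b(\lfloor n/s\rfloor)$ leaves.

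It remains to characterize the leaves of $U_s$ for $s\geq 2$: a node $u\in V_s$ is a leaf of $U_s$ iff every child of $u$ in $U$ has $\sigma<s$, and the recurrence then forces $u$ to be a branching node of $U$ with $\sigma(u)$ equal to exactly $s$. In particular the leaf sets of $U_2,U_3,\ldots$ are pairwise disjoint subsets of the branching nodes of $U$, so
\begin{align*}
\sum_{s\geq 2}b(\lfloor n/s\rfloor)\ \leq\ \sum_{s\geq 2}\#\mathrm{leaves}(U_s)\ \leq\ \#\{\text{branching nodes of }U\}.
\end{align*}
Since every internal node of $U$ has at least two children, a one-line edge-count yields $\#\mathrm{leaves}(U)\geq 1+\#\{\text{internal nodes of }U\}=1+\#\{\text{branching nodes of }U\}$, and combining with the previous inequality completes the proof. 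The step that needs the most care is the upward-closure argument, which must ensure that not merely the node-images but the entire subdivision paths of $T'$ inside $U$ sit within $V_s$, so that $U_s$ really inherits a genuine subgraph-subdivision of $T'$ and not just a minor of it; once this is clean, the rest is routine combinatorics.
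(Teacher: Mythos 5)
Your proof is correct and follows essentially the same strategy as the paper's: define the caterpillar level $\sigma(u)$, glue $s$-caterpillars onto the leaves of a smaller degree-$1$-free tree to force many level-$s$ branching nodes, observe that distinct $s$ give disjoint sets of branching nodes, and finish by comparing the branching-node count to the leaf count. The one refinement is your preliminary reduction to trees with no degree-$1$ internal nodes, which replaces the paper's three-case description of the level function by the single recurrence $\sigma(u)=1+\max_c\sigma(c)$; this WLOG is valid (suppressing a degree-$1$ internal node changes neither the leaf count nor the family of degree-$1$-free topological minors, since the preimage of such a node under an NCA-preserving map would be forced to have degree at most $1$ and is therefore a leaf), but you assert it rather than argue it, and it deserves a line of justification.
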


\begin{proof}
For any $s\geq 2$, we define an $s$-caterpillar to be a binary tree on $s$ leaves
and $s-1$ inner nodes creating a path. Consider a tree $T$, that
contains as a subgraph a subdivision of any binary tree on $n$ leaves and no
degree-1 nodes.
For a node $v\in T$, let $s(v)$ be the largest $s$, such that $T^{v}$
contains a subdivision of an $s$-caterpillar as a subgraph. We say
that such $v$ is on level $s(v)$. We observe that $s(v)$ has the following properties:
\begin{enumerate}
\item For every child $u$ of $v$, $s(u) \leq s(v)$.
\item If the degree of $v$ is 1 then, for the unique child $u$ of $v$, $s(u)=s(v)$.
\item If the degree of $v$ is 2 then, for some child $u$ of $v$, $s(u)=s(v)-1$.
\end{enumerate}

Choose a parameter $s\geq 2$ and consider any binary tree on
$\lfloor n/s \rfloor$ leaves and no degree-1 nodes. By replacing all of its leaves by
$s$-caterpillars we obtain a binary tree on at most $n$ leaves and still
no degree-1 nodes. A subdivision of this new binary tree must be a subgraph of $T$. The leaves
of the original binary tree must be mapped to nodes on level at
least $s$ in $T$.
Thus, by removing all nodes on level smaller than $s$ from $T$ we obtain a tree $T'$
that contains as a subgraph a subdivision of
any binary tree on $\lfloor n/s \rfloor$ leaves and no degree-1 nodes, and
so there are at least $b(\lfloor n/s\rfloor)$ leaves in $T'$.
By the properties of $s(u)$, a leaf of $T'$ corresponds to a node $u\in T$
on level $s$ (as otherwise $u$ has a child on level at least $s$), and
furthermore the degree of $u$ must be at least 2 (as otherwise the only
child of $u$ is on the same level).
Because the level of every $u\in T$ is unambiguously defined,
the total number of degree-2 nodes in $T$ is at
least:
\[ \sum_{s\geq 2} b(\lfloor n/s\rfloor) \]
To complete the proof, observe that in any tree the number of leaves
is larger than the number of degree-2 nodes.
\end{proof}

We want to extract an explicit lower bound on $b(n)$ from Lemma~\ref{lem:lowerbinary}.

\begin{theorem}
\label{thm:binary}
For any $c>1$ such that $\zeta(c) > 2$ we have $b(n) = \Omega(n^{c})$.
\end{theorem}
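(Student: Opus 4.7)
The plan is to prove $b(n) \geq C\cdot n^c$ by strong induction on $n$ for a suitably small constant $C > 0$, using the recurrence from Lemma~\ref{lem:lowerbinary}. The naive hope is that summing $b(\lfloor n/s\rfloor) \geq C(n/s)^c$ over $s\geq 2$ produces a factor of $\zeta(c)-1$, which exceeds $1$ whenever $\zeta(c)>2$, closing the induction. The real issue is the floor function, so the argument has to be robust enough to absorb this loss.

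First, I would exploit the slack in the hypothesis: since $\zeta(c)>2$, there is a finite cutoff $S$ and a number $\delta>0$ with $\sum_{s=2}^{S} s^{-c} \geq 1+\delta$. Truncating the infinite sum at $S$ is what lets us control the floor uniformly. Next, for any $\epsilon>0$, if $n$ is large enough (say $n\geq n_0$, where $n_0$ depends only on $S$ and $\epsilon$) then $\lfloor n/s\rfloor \geq (1-\epsilon)\,n/s$ for every $s\in\{2,\ldots,S\}$. Pick $\epsilon$ small enough that $(1-\epsilon)^c(1+\delta) \geq 1$.

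For the base case, I choose $C>0$ small enough that $b(m) \geq C\cdot m^c$ for every $1\leq m\leq n_0$; this is possible because $b(m)\geq 1$ for $m\geq 1$ and $n_0$ is a fixed finite number (the case $m=0$ is trivial as $b(0)=0$). The inductive step for $n>n_0$ then runs as follows: Lemma~\ref{lem:lowerbinary} gives
\begin{align*}
b(n) \;\geq\; \sum_{s=2}^{S} b(\lfloor n/s\rfloor) \;\geq\; C\sum_{s=2}^{S}\lfloor n/s\rfloor^{c}
      \;\geq\; C(1-\epsilon)^{c} n^{c}\sum_{s=2}^{S} s^{-c} \;\geq\; C\cdot n^{c},
\end{align*}
where each $\lfloor n/s\rfloor$ is strictly less than $n$ so the inductive hypothesis applies. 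This yields $b(n) = \Omega(n^c)$.

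The main obstacle is exactly the point where the naive calculation breaks: one cannot afford the identity $\lfloor n/s\rfloor^c = (n/s)^c$ to leak a multiplicative loss that accumulates across arbitrarily many terms. The resolution, truncating at a finite $S$ chosen using the strict inequality $\zeta(c)>2$ to create room $1+\delta$ beyond $1$, is what makes this lossless approach work and is the only nontrivial ingredient.
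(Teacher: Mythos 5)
Your proposal is correct and follows essentially the same strategy as the paper: exploit the slack $\zeta(c)>2$ to truncate the sum at a finite cutoff with $\sum_{s=2}^{S}s^{-c}\geq 1+\delta$, absorb the floor loss by a factor close to $1$ for all $n\geq n_0$, and close the induction by choosing the constant small enough to cover the base cases up to $n_0$. The paper's phrasing uses a single $\epsilon$ playing both roles (slack and floor loss) and bounds $\lfloor x\rfloor^c\geq x^c/(1+\epsilon)$ for $x\geq x_0$ rather than your $(1-\epsilon)^c$, but these are only notational variants of the same argument.
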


\begin{proof}
We assume that $\zeta(c) > 2$, so $\sum_{s=2}^{\infty} (1/s)^{c} > 1$. Then there exists
$\epsilon>0$ and $t$, such that $\sum_{s=2}^{t} (1/s)^{c} = 1+\epsilon$.

Now consider a function $f(x)=x^{c}$.
We claim that there exists $x_{0}$, such that for all $x \geq x_{0}$ we have
$f(\lfloor x \rfloor) \geq  f(x-1) \geq f(x)/(1+\epsilon)$. This is because of the following
transformations:
\begin{align*}
f(x-1) &\geq f(x) / (1+\epsilon) \\
(x-1)^{c} &\geq x^{c} / (1+\epsilon) \\
1-1/x &\geq 1/(1+\epsilon)^{1/c} 
\end{align*}
where the right side is smaller than 1, so the inequality holds for any sufficiently large
$x$.

We are ready to show that $b(n) \geq a \cdot n^{c}$ for some constant $a$. We proceed by induction on $n$.
By Lemma~\ref{lem:lowerbinary}, we know that $b(n) \geq \sum_{s\geq 2} b(\lfloor n/s \rfloor)$.
By adjusting $a$, it is enough to show that, for sufficiently large values of $n$,
$b(N) \geq a\cdot N^{c}$ holding for all $N < n$ implies $b(n) \geq a\cdot n^{c}$.
We lower bound $b(n)$ as follows:
\begin{align*}
b(n) &> \sum_{s\geq 2} b(\lfloor n/s \rfloor )  
\geq \sum_{s\geq 2} a\cdot (\lfloor n/s\rfloor)^{c} 
\geq a\sum_{s=2}^{n/x_{0}} (\lfloor n/s\rfloor)^{c} 
 \geq a\cdot n^{c} \cdot \sum_{s=2}^{n/x_{0}} (1/s)^{c} / (1+\epsilon)
\end{align*}
where in the last inequality we used that, as explained in the previous paragraph,
$(\lfloor n/s\rfloor)^{c} \geq (n/s)/(1+\epsilon)$ for $n/s \geq x_{0}$.
By restricting $n$ to be so large that $n/x_{0} \geq t$, i.e., $n \geq t\cdot x_{0}$,
we further lower bound $b(n)$ as follows:
\begin{align*}
b(n) &\geq a\cdot n^{c} \cdot \sum_{s=2}^{t} (1/s)^{c} / (1+\epsilon) 
 \geq a\cdot n^{c} \cdot (1+\epsilon)/ (1+\epsilon) 
 = a \cdot n^{c}\qedhere
\end{align*}
\end{proof}

To apply Theorem~\ref{thm:binary}, we verify with numerical calculation that
$\zeta(1.728) > 2$, and so $b(n) = \Omega(n^{1.728})$.

\subsection{General Trees}

Now we move to general trees. We want to lower bound the number of leaves $u(n)$
in a tree, that contains as a subgraph a subdivision of any tree on $n$ leaves and no
degree-1 nodes.

We start with lower bounding the number of nodes of degree at
least $d$ in such a tree, denoted $u_{\geq}(n,d)$. Similarly, $u(n,d)$ denotes the number
of nodes of degree exactly $d$.

\begin{lemma}
\label{lem:lowergeneral}
For any $d\geq 2$, we have $u_{\geq}(n,d) \geq \sum_{s\geq 2}u(\lfloor n/((s-1)(d-1)+1) \rfloor)$.
\end{lemma}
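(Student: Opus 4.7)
The plan is to adapt the proof of Lemma~\ref{lem:lowerbinary} by replacing the $s$-caterpillar with a gadget tailored to forcing branching of degree at least $d$. Concretely, I would define an $(s,d)$-caterpillar as a tree on a path of $s-1$ internal nodes, where each internal node has exactly $d$ children: the first $s-2$ have one internal child (extending the path) and $d-1$ leaf children, and the last has $d$ leaf children. Such a gadget has $(s-1)(d-1)+1$ leaves and no degree-1 nodes, and for $d=2$ it specializes to the original $s$-caterpillar.

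For each $v \in T$, define $s(v)$ to be the largest $s$ such that $T^v$ contains a subdivision of an $(s,d)$-caterpillar. The first step is to establish three properties analogous to the binary case: (1) $s(u) \leq s(v)$ for every child $u$ of $v$; (2) if $\degree(v) < d$, then some child $u$ satisfies $s(u) = s(v)$, because the root of an $(s,d)$-caterpillar has degree $d$ and therefore cannot be mapped to $v$; (3) if $\degree(v) \geq d$, then every child $u$ satisfies $s(u) \leq s(v) - 1$, because otherwise we could extend the gadget hosted in $T^u$ by using $v$ as a new root with one path to $u$ and $d-1$ additional leaves drawn from the other (necessarily non-empty) subtrees of $v$, contradicting the maximality of $s(v)$.

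Next, fix $s \geq 2$ and let $T'$ be the subtree of $T$ induced by the nodes of level at least $s$ (connected thanks to (1)). Take any tree on $\lfloor n/((s-1)(d-1)+1) \rfloor$ leaves and no degree-1 nodes, and replace each leaf by an $(s,d)$-caterpillar; the result is a tree on at most $n$ leaves with no degree-1 nodes, which by assumption embeds as a topological minor of $T$. The nodes of the original tree all land in $T'$ (each replaced leaf lies above an embedded $(s,d)$-caterpillar, hence has level at least $s$), so $T'$ contains a subdivision of every tree on $\lfloor n/((s-1)(d-1)+1) \rfloor$ leaves and no degree-1 nodes. Consequently $T'$ has at least $u(\lfloor n/((s-1)(d-1)+1) \rfloor)$ leaves.

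Finally, using (2) and (3), a leaf of $T'$ is characterized as a node of $T$ of degree at least $d$ and level exactly $s$: (2) forces $\degree \geq d$ (else a same-level child would also lie in $T'$), and (3) forces the level to equal $s$ (else some child would still have level at least $s$). Since every node of degree at least $d$ already supports the $(2,d)$-caterpillar (a $d$-star) and thus has level at least $2$, and since each such node is assigned exactly one level, summing over $s \geq 2$ yields $u_{\geq}(n,d) \geq \sum_{s \geq 2} u(\lfloor n/((s-1)(d-1)+1) \rfloor)$. The main obstacle is property (3): one must carefully use that $\degree(v) \geq d$ guarantees $d-1$ spare non-empty child subtrees whose root nodes serve as the extra ``leaf'' branches of the enlarged gadget—this is what strengthens the binary conclusion ``some child has $s(u) = s(v) - 1$'' to the sharper ``every child has $s(u) \leq s(v) - 1$'' needed here.
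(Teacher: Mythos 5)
Your gadget, the definition of $s(v)$, the replace-leaves-by-caterpillars argument, and the summation over $s\ge 2$ all match the paper's proof. Properties (1) and (2) are also stated and justified the same way. The issue is your property (3).

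You state (3) as an upper bound: ``if $\degree(v)\ge d$, then \emph{every} child $u$ satisfies $s(u)\le s(v)-1$.'' That statement is true and your proof of it (extend the caterpillar through $v$ using $d-1$ spare child subtrees) is correct. But it is only an upper bound, and the final step of your argument silently needs a \emph{lower} bound as well. To characterize a leaf $w$ of $T'$ as having level exactly $s$, you argue ``(3) forces the level to equal $s$, else some child would still have level at least $s$.'' For this you need to know that when $s(w)>s$, some child of $w$ has level at least $s(w)-1\ge s$; your (3) says nothing of the sort, since ``$\le s(v)-1$ for all children'' does not produce any child with level $\ge s(v)-1$. This is the direction the paper actually packages into its property (3), which reads ``for \emph{some} child $u$, $s(u)=s(v)-1$'' --- an equality that bundles both the upper bound you proved and the lower bound you are implicitly invoking. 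Your closing remark that the general case ``strengthens'' the binary statement ``some child has $s(u)=s(v)-1$'' to ``every child has $s(u)\le s(v)-1$'' is where the confusion lies: the two statements are incomparable, and the latter alone does not suffice.

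The gap is fillable, and in fact your own upper bound is a useful ingredient: the root of the $(s(v),d)$-caterpillar embedded in $T^v$ must map to $v$ itself, because if it mapped to a node strictly inside some child subtree $T^u$ then the whole caterpillar would embed in $T^u$, forcing $s(u)\ge s(v)$ and contradicting your (3). Given that the root maps to $v$, the subdivided edge to the path-child descends into some child subtree $T^u$, which therefore contains an $(s(v)-1,d)$-caterpillar, giving $s(u)\ge s(v)-1$. With this added, the characterization of leaves of $T'$ goes through and the rest of your proposal matches the paper.
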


\begin{proof}
Fix $d\geq 2$. For any $s\geq 2$, we define an $(s,d)$-caterpillar to consist of
path of length $s-1$, where we connect $d-1$ leaves to every node except for the
last, where we connect $d$ leaves. The total number of leaves in an $(s,d)$-caterpillar
is hence $(s-1)(d-1)+1$.
Consider a tree $T$, that contains as a subgraph a subdivision of any tree on $n$
nodes and no degree-1 nodes.
For any node $v\in T$, let $s(v)$ be the largest $s$, such that
$T^{v}$ contains a subdivision of an $(s,d)$-caterpillar as a subgraph. This is
a direct generalization of the definition used in the proof of Lemma~\ref{lem:lowerbinary},
and so similar properties hold:
\begin{enumerate}
\item For every child $u$ of $v$, $s(u) \leq s(v)$.
\item If the degree of $v$ is less than $d$ then, for some child $u$ of $v$, $s(u)=s(v)$.
\item If the degree of $v$ is at least $d$ then, for some child $u$ of $v$, $s(u)=s(v)-1$.
\end{enumerate}

Choose any $s\geq 2$ and consider a tree on $\lfloor n/((s-1)(d-1)+1)\rfloor$ leaves.
By replacing all of its leaves by $(s,d)$-caterpillars, we obtain a tree on at most
$n$ leaves, so subdivision of this new tree must be a subgraph of $T$. The
leaves of the original tree must be mapped to nodes on level at least $s$
in $T$, and by the same reasoning as in the proof of Lemma~\ref{lem:lowerbinary}
this implies that there are at least $u(\lfloor n/((s-1)(d-1)+1)\rfloor)$ nodes
on level $s$ and of degree at least $d$ in $T$, making the total number of nodes
of degree $d$ or more at least:
\[ \sum_{s\geq 2} u(\lfloor n/((s-1)(d-1)+1) \rfloor) \qedhere\]
\end{proof}

\begin{proposition}
\label{prop:degrees}
Let $n(d)$ denote the number of nodes of degree $d$, then the number of leaves
is $1+\sum_{d\geq 1} n(d)\cdot (d-1)$.
\end{proposition}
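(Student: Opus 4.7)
The plan is a simple double-counting argument on edges. In a rooted tree, every non-root node has exactly one parent edge, so the total number of edges equals $N-1$, where $N$ is the total number of nodes. On the other hand, in our convention $\degree(u)$ counts the children of $u$, so summing the degrees over all nodes also counts each edge exactly once (from the parent side). Leaves contribute $0$ to this sum, so the total number of edges is $\sum_{d\geq 1} d\cdot n(d)$.

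Next I would express $N$ in terms of the $n(d)$'s and the number $L$ of leaves: $N = L + \sum_{d\geq 1} n(d)$, since every node is either a leaf or has some degree $d \geq 1$. Equating the two expressions for the edge count, namely
\begin{align*}
\sum_{d\geq 1} d\cdot n(d) \;=\; N-1 \;=\; L + \sum_{d\geq 1} n(d) - 1,
\end{align*}
and moving $\sum_{d\geq 1} n(d)$ to the left-hand side, gives $\sum_{d\geq 1} (d-1)\cdot n(d) = L - 1$, which rearranges to the claimed formula $L = 1 + \sum_{d\geq 1} n(d)\cdot (d-1)$.

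There is no real obstacle here: the argument is a textbook edge-count, and the only thing to keep straight is the convention that $\degree$ denotes the number of children (not the graph-theoretic degree), which is consistent with the Preliminaries. The proposition can therefore be written in a few lines.
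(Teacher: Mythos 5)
Your argument is correct, and it takes a genuinely different route from the paper's. The paper proves the proposition by induction on the size of the tree: it strips off a root of degree $a$, applies the inductive hypothesis to the subtrees hanging below it, and then accounts for the change in $n(a)$ caused by reattaching the root. You instead give a direct double-count: the number of edges is $N-1$ on one hand, and $\sum_{d\geq 1} d\cdot n(d)$ on the other (each edge counted from its parent endpoint, with leaves contributing zero), and combining this with $N = L + \sum_{d\geq 1} n(d)$ yields the identity in two lines of algebra. Both are valid; yours is the more standard textbook argument and avoids the bookkeeping of the inductive case split on whether $d=a$, while the paper's inductive proof is self-contained in the same style as the surrounding lemmas. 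One small point worth keeping explicit (which you did note correctly): the proof relies on $\degree$ counting children rather than graph-theoretic degree, which is the convention fixed in the paper's Preliminaries; with the usual undirected degree the edge count would instead be $\tfrac{1}{2}\sum d\cdot n(d)$ and the identity would not rearrange the same way.
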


\begin{proof}
We apply induction on the size of the tree. If the tree consists of only one
node, then the claim holds. Assume that the root is of degree $a\geq 1$.
Then, by applying the inductive assumption on every subtree attached to
the root and denoting by $n'(d)$ the number of non-root nodes of degree
$d$, we obtain that the number of leaves is $a+\sum_{d} n'(d)\cdot (d-1)$.
Finally, $n(d)$ is $n'(d)$ if $d\neq a$, and $n(a)=n'(a)+1$, so the number of
leaves is in fact:
\[ a+n'(a)\cdot (a-1) + \sum_{d\neq a} n'(d)\cdot (d-1) = 1+\sum_{d}n(d)\cdot (d-1) \qedhere \]
\end{proof}

By combining Proposition~\ref{prop:degrees} with $u_{\geq}(n,d) = u(n,d)+u_{\geq}(n,d+1)$ and telescoping,
we obtain that the number of leaves is at least:
\[ 1+\sum_{d\geq 2} u(n,d)\cdot (d-1)  = 1 + \sum_{d\geq 2} (u_{\geq}(n,d)-u_{\geq}(n,d+1)) \cdot (d-1) = 1 + \sum_{d\geq 2} u_{\geq}(n,d)\]
Finally, by substituting Lemma~\ref{lem:lowergeneral} we obtain:
\[ u(n) \geq 1+\sum_{d\geq 2} \sum_{s\geq 2} u(\lfloor n/((s-1)(d-1)+1)\rfloor) \]

\begin{theorem}
\label{thm:general}
For any $c>1$ such that $\sum_{x,y\geq 1} 1/(x\cdot y+1)^{c} > 1$ we have
$u(n) = \Omega(n^{c})$.
\end{theorem}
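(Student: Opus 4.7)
The plan is to mimic the proof of Theorem~\ref{thm:binary} verbatim in structure, replacing the single index $s$ with the pair $(d,s)$ and the one-dimensional tail $\sum_{s\geq 2}(1/s)^c$ with the two-dimensional tail arising from the recurrence
\[ u(n) \geq 1+\sum_{d\geq 2}\sum_{s\geq 2} u\!\left(\lfloor n/((s-1)(d-1)+1)\rfloor\right) \]
derived just before the theorem statement. Under the substitution $x=s-1,\ y=d-1$, the hypothesis $\sum_{x,y\geq 1}1/(xy+1)^c>1$ is exactly what we want, and since $c>1$ both single summations $\sum 1/x^c$ converge so the double sum is finite and we can slice off an $\epsilon$-surplus with a finite truncation.

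First I would fix $\epsilon>0$ and a threshold $T$ such that the finite partial sum $\sum_{x=1}^{T}\sum_{y=1}^{T} 1/(xy+1)^{c}\geq 1+\epsilon$. Next, exactly as in the proof of Theorem~\ref{thm:binary}, I would pick $x_{0}$ so large that $f(x-1)\geq f(x)/(1+\epsilon)$ for $f(x)=x^{c}$ and all $x\geq x_{0}$; this then yields $(\lfloor n/(xy+1)\rfloor)^{c}\geq (n/(xy+1))^{c}/(1+\epsilon)$ whenever $n/(xy+1)\geq x_{0}$. Restricting attention to $n$ large enough that $n/((T-1)T+1)\geq x_{0}$ (so, say, $n\geq x_{0}(T^{2}+1)$) guarantees that the bound applies to every term with $1\leq x,y\leq T$.

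The inductive step is then straightforward: assuming $u(N)\geq a\cdot N^{c}$ for all $N<n$ with $a$ chosen small enough to absorb the base cases, we chain
\begin{align*}
u(n) &\geq \sum_{d\geq 2}\sum_{s\geq 2} u\!\left(\lfloor n/((s-1)(d-1)+1)\rfloor\right) \\
 &\geq \sum_{x=1}^{T}\sum_{y=1}^{T} a\cdot\left(\lfloor n/(xy+1)\rfloor\right)^{c} \\
 &\geq a\cdot n^{c}\cdot \sum_{x=1}^{T}\sum_{y=1}^{T} \frac{1}{(1+\epsilon)(xy+1)^{c}} \\
 &\geq a\cdot n^{c}\cdot \frac{1+\epsilon}{1+\epsilon}\ =\ a\cdot n^{c},
\end{align*}
which closes the induction. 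Finitely many small $n$ are absorbed by shrinking $a$, yielding $u(n)=\Omega(n^{c})$.

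No single step is really the bottleneck, as the argument is a direct two-dimensional analogue of the one for binary trees. The only place that needs a little care is verifying that the finite truncation of the double sum $\sum_{x,y\geq 1}1/(xy+1)^c$ can capture all of the mass beyond $1$ using only a \emph{finite} square $\{1,\dots,T\}^{2}$; this is what lets us demand $n\geq x_{0}(T^{2}+1)$ uniformly and avoids having to control the floor error on infinitely many tail terms at once. After that, the calculation above goes through unchanged.
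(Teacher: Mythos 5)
Your proof is correct and follows essentially the same route as the paper: truncate the double sum to a finite square capturing a surplus of $\epsilon$ above $1$, use the same floor estimate $f(\lfloor x\rfloor)\geq f(x)/(1+\epsilon)$ for $x\geq x_0$, restrict to $n$ large enough that all terms in the square have floor argument at least $x_0$, and close the induction. The only differences are cosmetic (your presentation of the intermediate truncation step is in fact slightly cleaner than the paper's).
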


\begin{proof}
We extend the proof of Theorem~\ref{thm:binary}. From $\sum_{x,y\geq 1} 1/(x\cdot y+1)^{c} > 1$
we obtain that there exists $\epsilon>0$ and $t$, such that $\sum^{t}_{x=1}\sum^{t}_{y=1} 1/(x\cdot y+1)^{c} = 1+\epsilon$.

We know that $u(n) \geq \sum_{s,d\geq 2} u(\lfloor n/((s-1)(d-1)+1) \rfloor)$. As in the proof
of Theorem~\ref{thm:binary}, we only need to show that $u(n) \geq a \cdot n^{c}$ for sufficiently large $n$.
We lower bound $u(n)$:
\begin{align*}
u(n) &\geq \sum_{s,d\geq 2} u(\lfloor n/((s-1)(d-1)+1) \rfloor) \\
&\geq \sum^{t}_{x=1}\sum^{t}_{y=1} a\cdot (\lfloor n/(x\cdot y+1)\rfloor)^{c} \\
& \geq a\cdot n^{c}\cdot \sum_{x\geq 1} \sum_{y=1}^{n/(x_{0}\cdot x)} (1/(x\cdot y+1))^{c} / (1+\epsilon)
\end{align*}
We choose $n$ so large that $n/(x_{0}\cdot t) \geq t$ and further lower bound $u(n)$:
\begin{align*}
u(n) &\geq a\cdot n^{c}\cdot \sum_{x=1}^{t} \sum_{y=1}^{t} (1/(x\cdot y+1))^{c} / (1+\epsilon) \\
&\geq a\cdot n^{c} \cdot (1+\epsilon) / (1+\epsilon) \\
&= a\cdot n^{c} \qedhere
\end{align*}
\end{proof}

We verify with numerical calculations that $\sum_{x,y\geq 1} 1/(x\cdot y+1)^{2.174} > 1$ by computing
the sum
$\sum^{1000}_{x=1}\sum^{1000}_{y=1} 1/(x\cdot y+1)^{2.174}$ and conclude that $u(n) =\Omega(n^{2.174})$.

\section{Complexity of the Decoding}
\label{sec:convert}

In this section we present a generic transformation, that converts our existential results into
labeling schemes with constant query time in the word-RAM model with word size $\Omega(\log n)$.
Our goal is to show the following statement for any class $\mathcal{T}$ of rooted trees closed
under taking topological minors: if, for any $n$, there exists a minor-universal tree of size
$n^{c}$, then there exists a labeling scheme with labels consisting of $c\log n+o(\log n)$
bits, such that given the labels of two nodes we can compute the label of their nearest common
ancestor in constant time. We focus on general trees, and leave verifying that the same method
works for any such $\mathcal{T}$ to the reader.

Before proceeding with the main part of the proof, we need a more refined method of converting
a minor-universal tree into a labeling scheme for nearest common ancestors. Intuitively, we
would like some nodes to receive shorter labels. This can be enforced with the following lemma.

\begin{lemma}
\label{lem:weighted}
For any tree $T$, it is possible to assign a distinct label $\ell(u)$ to every node $u\in T$,
such that $|\ell(u)| \leq 2+\log ( |T| / (1+\degree(u)))$.
\end{lemma}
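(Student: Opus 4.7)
The plan is to construct the labels as codewords of a binary prefix code whose lengths are prescribed by Kraft's inequality applied with weights $w(u)=1+\degree(u)$. First I would compute the total weight
\[
W=\sum_{u\in T}w(u)=|T|+\sum_{u\in T}\degree(u)=|T|+(|T|-1)=2|T|-1,
\]
using that in a rooted tree the sum of the degrees equals the number of edges. I would then propose label lengths $L(u)=\lceil\log(W/w(u))\rceil$, which is a nonnegative integer for each $u$.

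The key step is to invoke Kraft's inequality: since $2^{-L(u)}\le w(u)/W$ by construction,
\[
\sum_{u\in T}2^{-L(u)}\le\frac{1}{W}\sum_{u\in T}w(u)=1,
\]
so there exists a binary prefix code with codeword lengths exactly $L(u)$. Taking $\ell(u)$ to be the codeword assigned to $u$ yields distinct labels (in fact prefix-free). For the length bound, combine $\lceil x\rceil\le x+1$ with $W<2|T|$:
\[
L(u)\le\log\bigl(W/(1+\degree(u))\bigr)+1<\log\bigl(2|T|/(1+\degree(u))\bigr)+1=2+\log\bigl(|T|/(1+\degree(u))\bigr),
\]
finishing the argument.

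There is no real obstacle: the only observation worth highlighting is the choice of weight $1+\degree(u)$ rather than $\degree(u)$. The extra $+1$ both avoids degenerate situations at leaves (where $\degree(u)=0$) and, via the identity $W=2|T|-1<2|T|$, produces exactly the additive constant $2$ appearing in the statement. If one instead tried $w(u)=\degree(u)$, leaves would demand $L(u)=\infty$ and the total weight would be $|T|-1$ rather than $2|T|-1$, both of which break the bound.
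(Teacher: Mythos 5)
Your proof is correct, and it takes a somewhat different route from the paper's. You set up a weight $w(u)=1+\degree(u)$, compute the total $W=2|T|-1$ via the degree-sum identity, and invoke Kraft's inequality to obtain a prefix code with codeword lengths $\lceil\log(W/w(u))\rceil$, then finish with $\lceil x\rceil\le x+1$ and $W<2|T|$. The paper instead partitions the nodes into $\Oh(\log|T|)$ classes, placing $u$ in class $k$ when $\degree(u)\in[2^{k}-1,2^{k+1}-1)$; it bounds the size of class $k$ by $(2|T|-1)/2^{k}$ using the same degree-sum identity, assigns fixed-length binary codes within each class, and notes that distinctness across classes is automatic because the code length $\lceil\log((2|T|-1)/2^{k})\rceil=\lceil\log(2|T|-1)\rceil-k$ determines $k$. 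Both arguments rest on the same combinatorial fact (the weights $1+\degree(u)$ sum to $2|T|-1$), and the numerical bound is the same. Your Kraft version is arguably cleaner and gives a prefix-free family for free, which is stronger than the mere distinctness the lemma asks for; the paper's bucketing version is more explicit about which nodes share a label length, a feature it does not actually exploit. One minor point worth noting for completeness: if $L(u)=0$ can occur it forces $w(u)\ge W$, which happens only when $|T|=1$, where the empty label is trivially fine.
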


\begin{proof}
We partition the nodes of $T$ into classes. For every $k=0,1,\ldots,\lfloor\log(|T|+1)\rfloor$,
the $k$-th class contains all nodes with degree from $[2^{k}-1,2^{k+1}-1)$.
Observe that the sum of degrees of all nodes of $T$ is $|T|-1$, and consequently
the $k$-th class consists of at most $(2|T|-1)/2^{k}$ nodes. Thus,
we can assign a distinct binary code of length $\lceil\log((2|T|-1)/2^{k})\rceil\leq 2+\log(|T| / (1+\degree(u)))$
as the label of every node $u$ in the $k$-th class.
The length of the code uniquely determines $k$ so the labels are indeed distinct.
\end{proof}

Let $b$ be a parameter to be fixed later. To define the labels of all nodes of a tree $T$, we
recursively decompose it into smaller trees as follows, similarly to~\cite{thorup2001compact}.
First, we call a node $u$ such that $|T^{u}|\geq n/b$
big, and small otherwise. Let $T'$ be the subgraph of $T$ consisting of all big nodes (notice that
if $u$ is big, then so is its parent). Then there are at most $b$ leaves in $T'$, as each of
them corresponds to a disjoint subtree of size at least $n/b$. Therefore, there are less than $b$
branching nodes of $T'$.  We call all leaves, all branching nodes, all big children of branching nodes
and, if the root of $T'$ has exactly one big child, also the root and its only big child, interesting.
The total number of nodes designated as interesting
so far is $\Oh(b)$. We additionally attach virtual interesting nodes to some interesting
nodes as follows. For a big node $u$, $\weight(u)$ is defined as the total size
of all small subtrees attached to it.
If $u$ is the root, a leaf, or a branching node of $T'$, then we attach $\lceil \weight(u)/(n/b)\rceil$
virtual interesting nodes as its children. If $u$ is a big child of an interesting node,
then there is a unique path $v=v_{0}-v_{1}-\ldots v_{s}=u$ from a leaf or a branching node of $T'$
to $u$ that do not contain any other interesting nodes. We attach $\lceil \sum_{i=1}^{s}\weight(v_{i})/(n/b)\rceil$ virtual
interesting nodes as children of $u$.
We denote by $T^{c}$ the tree induced by all interesting nodes (including the virtual ones),
meaning that its nodes are all interesting nodes and the parent of a non-root interesting node
is its nearest interesting ancestor in $T$.
Because the sum of $\weight(u)$ over all big nodes $u$ is at most $n$, the total number
of virtual interesting nodes is $\Oh(b)$.
Thus, the total number of nodes in $T^{c}$ is $\Oh(b)=a\cdot b$.
Further, any interesting node has at most one big non-interesting child.
See Figure~\ref{fig:partition} for a schematic illustration of such a partition. Every
subtree rooted at a small node such that its parent is big is then decomposed recursively
using the same parameter $b$. Observe that the depth of the recursion is $\log_{b}n = \log n / \log b$.

\begin{figure}[t]
\begin{center}
\includegraphics[scale=0.5]{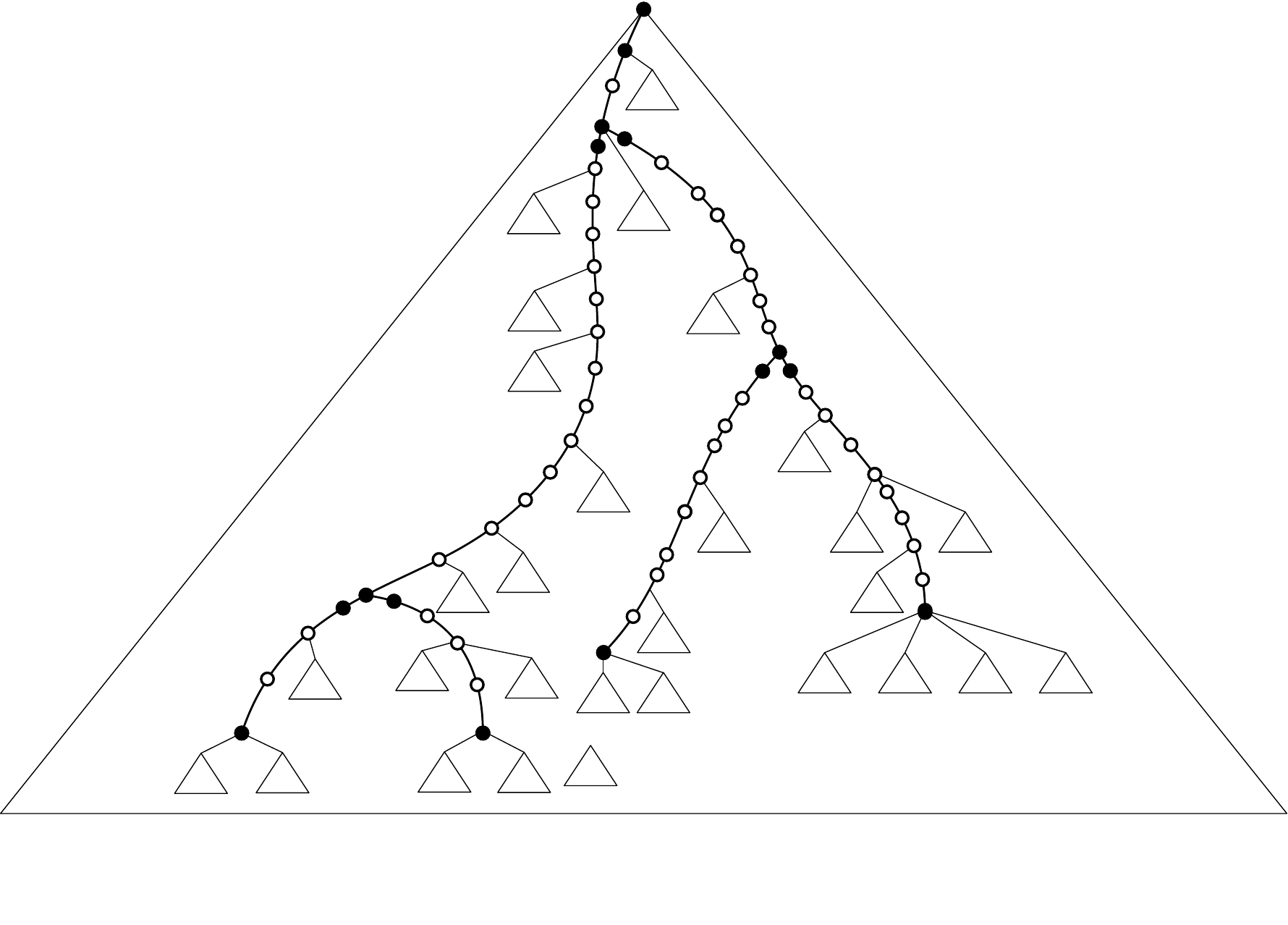}
\end{center}
\caption{A schematic illustration of a partition by choosing $\Oh(b)$ interesting nodes. Circles represent big nodes, and filled circles
represent interesting big nodes.}
\label{fig:partition}
\end{figure}

We are ready to define a query-efficient labeling scheme. The label of every node consists of
$\Oh(\log n / \log b)$ variable-length nonempty fields $f_{1},f_{2},\ldots,f_{s}$ and some shared
auxiliary information which will be explained later.
The fields are simply concatenated together, and hence also need to separately store
$|f_{1}|,|f_{2}|,|f_{3}|,\ldots,|f_{s-1}|$. This is done by extending a standard construction
as explained in Appendix~\ref{sec:proofs}.

\begin{restatable}{lemma}{rankselectlemma}
\label{lem:rank/select}
Any set of at most $s$ integers from $[1,M]$, such that $M=\Oh(\log n)$, can be encoded with
$\Oh(s\cdot \max\{1,\log\frac{M}{s}\})$ bits, so that we can implement the following operations in constant time:
\begin{enumerate}
\item extract the $k^\text{th}$ integer,
\item find the successor of a given $x$,
\item construct the encoding of a new set consisting of the smallest $k$ integers.
\end{enumerate}
The encoding depends only on the stored set (and the values of $s$ and $M$) and not on
how it was obtained.
\end{restatable}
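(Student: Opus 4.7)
The plan is to use an enumerative encoding based on the combinatorial number system together with precomputed lookup tables. For a subset $S = \{x_1 < x_2 < \ldots < x_k\} \subseteq [1,M]$ with $k \leq s$, I would first store $k$ using $\lceil \log(s+1) \rceil$ bits and then represent the sorted sequence as the integer $\sum_{i=1}^{k} \binom{x_i - 1}{i}$, which is the standard bijection between $k$-subsets of $[1,M]$ and $\{0,1,\ldots,\binom{M}{k}-1\}$. The resulting length $L$ is $\Oh(\log s) + \log\binom{M}{s}$. In the regime $s \leq M/2$ the estimate $\binom{M}{s} \leq (eM/s)^s$ gives $L = \Oh(s \log(M/s))$, while for $s > M/2$ the cruder bound $\binom{M}{s} \leq 2^M$ together with $s = \Omega(M)$ gives $L = \Oh(s)$; in both cases $L$ matches the claimed $\Oh(s \cdot \max\{1,\log(M/s)\})$, and since $s\log(M/s)$ is maximized at $s = M/e$ with value $\Oh(M)$, we also obtain $L = \Oh(M) = \Oh(\log n)$.

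Because $L = \Oh(\log n)$ the number of distinct encodings is $n^{\Oh(1)}$, so I can afford to precompute, for each pair $(M,s)$ used in the labeling scheme (of which there are only $\polylog(n)$ many, since $M,s = \Oh(\log n)$), three tables indexed by the encoding $e$: one returning the $k$-th element of the set encoded by $e$, one returning the successor of a given $x \in [1,M]$, and one returning the encoding of the smallest $k$ elements. Each table has $\Oh(2^L \cdot M)$ entries of $\Oh(L)$ bits, so the combined size over all pairs remains polynomial in $n$ and can sit in the shared auxiliary information. Every query then reduces to a single word-RAM lookup and runs in constant time, and the encoding is canonical by construction since it depends only on the set itself (and on $s$ and $M$), as required.

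The main obstacle will be the technical verification that $L$ collapses to $\Oh(M)$ in every regime, since this is what simultaneously keeps the label length within the claimed budget and keeps the precomputed tables within $n^{\Oh(1)}$ space. Once that inequality is pinned down, operations 1--3 are nothing more than three precomputed functions of the small index $e$, and no additional algorithmic ingredient is needed beyond the standard observation that a polynomial-size shared table is allowed in the word-RAM model with word size $\Omega(\log n)$.
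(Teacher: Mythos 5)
Your encoding idea is perfectly fine as far as the space bound goes: the combinatorial number system does give a canonical representation of $\Oh(s\cdot\max\{1,\log(M/s)\})$ bits, and your bound $L=\Oh(M)=\Oh(\log n)$ is correct. The canonicity requirement (item about depending only on the set) is also automatically satisfied. But the proposal has a genuine gap in the way it implements the three operations, and it is a gap that defeats the purpose of the lemma.

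The tables you describe are indexed by the full encoding $e$, so they have $\Theta(2^{L})$ entries; across the $\polylog n$ pairs $(M,s)$ they total $2^{\Theta(L)}\cdot\polylog n$ bits, where $L$ is the encoding length. The problem is that this auxiliary state cannot live anywhere. In the labeling-scheme model of Section~\ref{sec:convert} there is no global memory shared between nodes: the decoder's only inputs are the two labels, so any table it consults must be stored \emph{inside} each label. The paper does exactly this with its $\nca[x][y]$ table and explicitly counts its $\Oh((ab)^{2c}\log^{3}b)$ bits against the $o(\log n)$ slack. Your table is exponentially larger than $L$; in the concrete application $s=\log n/\log b$, $M=\Theta(\log n)$, so $L=\Theta(\log n\cdot\log\log b/\log b)$ and $2^{L}=n^{\Theta(\log\log b/\log b)}$. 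No admissible choice of $b\le n$ makes this $o(\log n)$, so the table can never be packed into a label.

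The paper's proof avoids this by never building a table indexed by the full structure. It splits each $x_i$ into a quotient and a remainder modulo a block length $b=M/s$, stores the remainders explicitly, stores the monotone quotient sequence as a bit vector of length $\Oh(s)$, and equips that bit vector with Clark's select structure. The only tabulation there is over bit strings of length $\tfrac12\log s$, a table of $\Oh(\sqrt{s}\cdot\polylog s)=o(s)$ bits, which fits inside the claimed budget. The other design burden the paper carries—and which your proposal sidesteps because enumerative encoding is trivially canonical—is making the Clark structure survive truncation without re-deriving non-canonical internal state; that is why the paper redefines the macroblock layout and shrinks the tabulation parameter from $\tfrac12\log m$ to $\tfrac12\log s$. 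In short, your scheme is sound if polynomial-size precomputation external to the labels were allowed, but in the model actually used here the constant-time operations must be supported by the $\Oh(s\cdot\max\{1,\log(M/s)\})$-bit encoding itself (plus at most that much auxiliary data), and your lookup tables are exponentially too large for that.
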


Lemma~\ref{lem:rank/select} is applied to the set containing all numbers of the form
$\sum_{j=1}^{i}|f_{j}|$, for $i=1,2,\ldots,s-1$. Then, given a position in the concatenation
we can determine in constant time which field does it belong to, or find the first position
corresponding to a given field. We can also truncate the concatenation to contain only
$f_{1},f_{2},\ldots,f_{i}$ in constant time.

The $i$-th step of the recursive decomposition corresponds to three fields
$f_{3i-2},f_{3i-1},f_{3i}$, except that the last step corresponds to between one
and two fields. Below we describe how the fields corresponding to a single step 
are defined.

Consider a node $u\in T$ and let $u'$ be its nearest interesting ancestor. The first field
is the label of $u'$ obtained from by applying Lemma~\ref{lem:weighted} on $T^{c}$.
If $u=u'$ then we are done. Otherwise, $u\neq u'$, and we have two possibilities.
If $u'$ is a leaf or a branching node of $T'$, the second field contains a single 0.
Otherwise, $u'$ is a big child of a branching node or the root of $T'$, and the nearest
big ancestor of $u$ is some $v_{j}$ on a path $v_{0}-v_{1}-\ldots-v_{s}=u'$,
where $j\in\{1,2,\ldots,s\}$ and $v_{0}$ is a leaf or a branching node of $T'$.
In such case, we assign binary codes to all nodes $v_{1},v_{2},\ldots,v_{s}$ and
choose the second field to contain the code $\ell_{j}$ assigned to node $v_{j}$.
The codes should have the property
that $|\ell_{j}| \leq 1+\log((\sum_{i=1}^{s}\weight(v_{i}))/\weight(v_{j}))$, and
furthermore $\ell_{1}<_{lex} \ell_{2}<_{lex} \ldots <_{lex} \ell_{s}$. Such codes
can be obtained by the following standard lemma, that essentially follows
by the reasoning from Lemma~\ref{lem:order}.
This is almost identical to Lemma 2.4 in~\cite{thorup2001compact} (or Lemma 4.7
of~\cite{AHL14}), except that we prefer the standard lexicographical order.

\begin{lemma}
Given positive integers $b_{1},b_{2},\ldots,b_{m}$ and denoting $B=\sum_{i=1}^{m}b_{i}$,
we can find nonempty binary strings $s_{1}<_{lex} s_{2}<_{lex}\ldots <_{lex} s_{m}$, such that
$|s_{i}| \leq 1+\log(B/b_{i})$.
\end{lemma}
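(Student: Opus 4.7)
The plan is to derive this lemma directly from Lemma~\ref{lem:order} by interpreting its sequence $a_B$ as the inorder traversal of a balanced binary tree, and then equipping that tree with a codeword assignment under which the inorder visits the nodes in exactly the lexicographic order of their codewords.

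First, I would apply Lemma~\ref{lem:order} with $N=B$ to the sequence $(b_1,b_2,\ldots,b_m)$, whose entries are positive integers summing to $B$. This gives indices $1\leq j(1)<j(2)<\ldots<j(m)$ such that the $j(i)$-th element of $a_B$ is at least $b_i$. Second, I would view $a_B$ as the inorder listing of a rooted binary tree $\mathcal{T}_B$ whose root stores the value $B$ and whose left and right subtrees are each a copy of $\mathcal{T}_{\lfloor B/2\rfloor}$. A direct induction on depth shows that a node at depth $d$ in $\mathcal{T}_B$ stores the value $\lfloor B/2^d\rfloor$, so the node $v_i$ corresponding to the $j(i)$-th entry of $a_B$ sits at some depth $d_i$ with $\lfloor B/2^{d_i}\rfloor \geq b_i$, and therefore $d_i \leq \log(B/b_i)$.

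Third, I would assign to every node $v$ of $\mathcal{T}_B$ the codeword $c(v)$ obtained by reading off the root-to-$v$ path (with $0$ for a left edge and $1$ for a right edge) and appending a trailing $1$. Thus $|c(v)| = \mathrm{depth}(v)+1 \geq 1$, so every codeword is nonempty. A short structural induction establishes that the inorder traversal of $\mathcal{T}_B$ lists its nodes in exactly the $<_{lex}$ order of their codewords: the root's code ``$1$'' is lex-greater than every code in its left subtree (all of which begin with $0$) and lex-smaller than every code in its right subtree (all of which are strict extensions of the string $1$), and the claim then follows from the induction hypothesis applied to both subtrees. Setting $s_i := c(v_i)$ consequently yields nonempty strings with $s_1 <_{lex} s_2 <_{lex} \ldots <_{lex} s_m$ and $|s_i| = d_i + 1 \leq 1 + \log(B/b_i)$, as required.

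The only genuinely non-obvious ingredient is the encoding convention. Without the appended trailing $1$, the root would receive the empty string, which is lex-smaller than every other code rather than lying between its two subtrees, and the elegant inorder--lex correspondence would collapse; the extra bit is precisely what restores it while costing exactly one additional bit of length, matching the ``$1+$'' in the claimed bound.
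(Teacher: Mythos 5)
Your argument is correct, and it follows a genuinely different (though closely related) route from the paper's. The paper gives a self-contained recursive construction in the spirit of alphabetic/Shannon--Fano codes: choose a ``weighted median'' index $j$ so that the prefix sum up to $j$ does not exceed $\lfloor B/2\rfloor$, assign the string $1$ to the pivot, and recurse on the left half prepending $0$ and on the right half prepending $1$; the length bound then follows because the working sum halves at each level of recursion. You instead \emph{derive} the result as a corollary of Lemma~\ref{lem:order}: you observe that $a_B$ is precisely the inorder listing of a perfect binary tree $\mathcal{T}_B$ whose depth-$d$ nodes all store $\lfloor B/2^d\rfloor$ (using $\lfloor\lfloor B/2\rfloor/2\rfloor=\lfloor B/4\rfloor$, etc.), apply the domination lemma with $N=B$ to place each $b_i$ at a node of depth $d_i\leq\log(B/b_i)$, and assign the path-plus-trailing-$1$ codeword so that inorder position and lexicographic order coincide; the root receives $1$, left-subtree codes begin with $0$ and are lex-smaller, and right-subtree codes are proper extensions of $1$ and hence lex-larger, with the claim propagating by induction under prefix-prepending. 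Your observation about the trailing $1$ being exactly what makes the root sit between its two subtrees (and accounts for the additive $1$ in the bound) is the right crux and is correctly handled. The trade-off is essentially expository: the paper's version is shorter and does not rely on Lemma~\ref{lem:order} at all (it only \emph{remarks} that the two are connected), whereas your version makes that remarked connection fully rigorous, at the cost of having to justify the inorder--lexicographic correspondence. Both yield the same bound and both are valid.
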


\begin{proof}
We choose the largest $j$, such that $\sum_{i<j}b_{j} \leq \lfloor B/2\rfloor$. We set
$s_{j+1}=1$. We recursively define the binary strings for $b_{1},b_{2},\ldots,b_{j}$
and prepend $0$ to each of them. Then, we recursively define the binary strings for $b_{j+2},\ldots,b_{m}$
and prepend $1$ to each of them. To verify that $s_{i}\leq 1+\log (B/b_{i})$ holds, observe
that the sum decreases by a factor of at least 2 in every recursive call, and the length of the binary
strings increases by 1.
\end{proof}

Let $u''$ be the nearest big ancestor of $u$. If $u=u''$ then we are done.
Otherwise, let $v_{1},v_{2},\ldots,v_{d}$ be all the small children of $u''$. We order
them so that $|T^{v_{i}}| \geq |T^{v_{i+1}}|$ for $i=1,2,\ldots,d-1$. Then,
$u$ belongs to the subtree $T^{v_{k}}$, for some $k\in \{1,2,\ldots,d\}$. 
The third and final field is simply the binary encoding of $k$ consisting of
$1+\lfloor\log k\rfloor\leq 1+\log k$ bits.
This completes the description of the fields appended to the label in a single step of the recursion.

In every step of the recursion, the size of the current tree decreases by at least
a factor of $b$. If we could guarantee that the total length of all fields appended
in a single step is at most $c\log(a\cdot b)+o(\log b)$, this would be enough
to bound the total length of a label by $c\log n+o(\log n)$ as desired. However,
it might happen that the fields appended in the same step consist of even
$\log n$ bits. We claim that in such case the size of the current tree decreases
more significantly, similarly to the analysis of the labeling scheme for routing
given in~\cite{thorup2001compact}. The following lemma captures this property.

\begin{lemma}
\label{lem:decrease}
Let $t$ denote the total length of all fields corresponding to a single step of the
recursion, and $s=4+c\log(a\cdot b)$. Then the size
of the current tree decreases by at least a factor of $b\cdot 2^{\max\{0,t-s\}}$.
\end{lemma}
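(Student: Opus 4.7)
The plan is a case analysis based on the position of $u$ relative to its nearest interesting ancestor $u'$ and its nearest big ancestor $u''$. If $u=u'$, only $f_1$ is emitted and the recursion ends; if $u=u''\neq u'$, only $f_1$ and $f_2$ are emitted and the recursion also ends. In both cases there is no next tree and the lemma is vacuous. Hence I focus on the remaining case $u\neq u''$, where $u$ lies strictly inside a small subtree, and the next tree is $T^{v'_k}$ with $v'_k$ the $k$-th largest small child of $u''$ (the one whose subtree contains $u$).

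For this case two upper bounds on $|T^{v'_k}|$ are available. Since $v'_k$ is small, $|T^{v'_k}|<|T|/b$, which secures the base factor $b$. Since $v'_k$ is the $k$-th largest small child of $u''$, also $|T^{v'_k}|\leq \weight(u'')/k$, which is what will amplify the decrease by $2^{t-s}$ when $t>s$. The next step is to estimate each field: $|f_1|\leq 2+\log(ab)-\log(1+\degree_{T^c}(u'))$ by Lemma~\ref{lem:weighted} applied on $T^c$ (using $|T^c|\leq a\cdot b$); $|f_3|\leq 1+\log k$ by construction; and $|f_2|$ equals $1$ when $u'$ is a leaf, branching node, or root of $T'$, and otherwise is at most $1+\log(W/\weight(u''))$, where $W=\sum_{i=1}^s \weight(v_i)$ is the total small-subtree weight along the chain $v_0-v_1-\ldots-v_s=u'$. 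Plugging these bounds into the target $\weight(u'')/k\leq |T|/(b\cdot 2^{t-s})$ and simplifying reduces the claim, in both sub-cases, to a single inequality of the form
\[ 1+\degree_{T^c}(u') \;\geq\; \frac{Wb/|T|}{(ab)^{c-1}}, \]
with $W$ replaced by $\weight(u')$ in the leaf/branching/root case.

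The crux of the argument, and the main obstacle, is establishing this inequality; this is precisely where the virtual interesting nodes introduced during the decomposition pay off. By construction, $u'$ receives $\lceil W/(|T|/b)\rceil$ virtual children in the chain case, or $\lceil \weight(u')/(|T|/b)\rceil$ in the other case, and these count towards $\degree_{T^c}(u')$. Hence $1+\degree_{T^c}(u') \geq Wb/|T|$ (respectively $\weight(u')b/|T|$); combined with $(ab)^{c-1}\geq 1$ (which follows from $c\geq 1$), the required inequality is immediate. The remainder of the proof is routine bookkeeping with logarithms, checking that the $\log(1+\degree_{T^c}(u'))$ saving in $|f_1|$ cancels exactly the $\log(W/|T|)$ slack arising from replacing the trivial bound $|T|/b$ by the sharper bound $\weight(u'')/k$.
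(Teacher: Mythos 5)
Your argument follows the same route as the paper's: bound the three field lengths, use the virtual interesting nodes to lower-bound $\degree_{T^c}(u')$, and observe that the size of the next tree is at most $\weight(u'')/k$. The logic is sound and the conclusion is correct.

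One small bookkeeping slip is worth flagging. You estimate the first field as $|f_1|\leq 2+\log(ab)-\log(1+\degree_{T^c}(u'))$, apparently by applying Lemma~\ref{lem:weighted} directly to $T^c$ (which has at most $ab$ nodes). But $f_1$ must be the label of the image of $u'$ in the minor-universal tree $U_{ab}$ (whose labels are assigned via Lemma~\ref{lem:weighted}), since the decoder relies on a lookup table $\nca[x][y]$ that is shared across all steps and hence must be built from a fixed universal labeling of $U_{ab}$, not from a labeling depending on the particular $T^c$. The correct bound is therefore $|f_1|\leq 2+c\log(ab)-\log(1+\degree_{T^c}(u'))$, using that $u'$ maps to a node of degree at least $\degree_{T^c}(u')$ in $U_{ab}$ and $|U_{ab}|\leq (ab)^c$. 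Your underestimate of $|f_1|$ by $(c-1)\log(ab)$ is exactly the source of the spurious $(ab)^{c-1}$ denominator in your ``reduced'' inequality. With the corrected bound the reduction yields $1+\degree_{T^c}(u')\geq Wb/|T|$ with no slack, which is precisely what you then prove using the $\lceil W/(n/b)\rceil$ virtual children of $u'$. So your proof remains valid once $c\log(ab)$ is restored and the $(ab)^{c-1}$ factor is dropped.
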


\begin{proof}
If only the first field is defined, the claim is trivial, as its length is always at most $s$.
Observe that a node of degree $d$ must be mapped to a node of degree at least $d$
in the minor-universal tree, and consequently by Lemma~\ref{lem:weighted}
the length of the first field is at most $2+c\log(a\cdot b)-\log(1+\degree(u'))$,
where $u'$ is the nearest interesting node.
By construction, there are $\lceil \sum_{i=1}^{s}\weight(v_{i})/(n/b)\rceil$ virtual nodes
attached to $u'$, and so $\log(1+\degree(u')) \geq \log(\sum_{i=1}^{s}\weight(v_{i})/(n/b))$.
The length of the second field is $1+\log((\sum_{i=1}^{s}\weight(v_{i}))/\weight(v_{j}))$
(if $u'$ is a leaf or a branching node of $T'$, we define $s=1$ and $v_{1}=u'$).
Finally, the length of the third field is $1+\log k$ (or there is no third field).
All in all, we have the following:
\begin{align*}
t-s &\leq - \log(\sum_{i=1}^{s}\weight(v_{i})/(n/b)) + \log((\sum_{i=1}^{s}\weight(v_{i}))/\weight(v_{j}))+\log k  \\
&\leq \log(n/b) - \log(\weight(v_{j})) +\log k\\
&= \log((n/b)k/\weight(v_{j}))
\end{align*}
Observe that $\weight(v_{j}) \leq n/b$, so $b\cdot 2^{\max\{0,t-s\}} \leq b\cdot (n/b)k/\weight(v_j)$.
Finally, the size of the current tree changes to at most $\weight(v_{j})/k$ due to the
ordering of the children of $v_{j}$, or in other words decreases at least by a factor of $b \cdot (n/b)k/\weight(v_{j}) \geq b\cdot 2^{\max\{0,t-s\}}$.
\end{proof}

Now we analyze the total contribution of all steps to the total length of the label.
Let $t_{i}$ be the total length of all fields added in the $i$-th step, $r$ denote the
number of steps, and $s$ be defined as in Lemma~\ref{lem:decrease}. Using $c\geq 1$,
the total length of a label is then:
\[ \sum_{i=1}^{r} t_{i} \leq \sum_{i=1}^{r}s+\max\{0,t_{i}-s\} = r(4+ c\log a) + c\sum_{i=1}^{r}\log b+\max\{0,t_{i}-s\}  \]
Because in every step the size of the current tree decreases at least by a factor of
$b\cdot 2^{\max\{0,t_{i}-s\}}$, the product of such expressions is at most $n$, and
so the total length of a label can be upper bounded by:
\begin{align*}
\sum_{i=1}^{r} t_{i} &\leq \log n/\log b \cdot(4+c\log a) + c\log n  
= \Oh(\log n / \log b) +c\log n
\end{align*}
As long as $b=\omega(1)$, this is $c\log n+o(\log n)$ as required.

We move on to explaining how to implement a query in constant time given the labels
of $u$ and $v$. By considering the parts containing the concatenated fields, finding
the first position where they differ, and querying the associated rank/select structure,
we can determine in constant time the first field that is different in both labels. This
gives us the step of the recursive decomposition, such that $u$ and $v$ belong
to different small subtrees, or at least one of them is a big node (and thus does
not participate in further steps). Observe that the nearest common ancestor of $u$
and $v$ must a big node. Its label can be found by, essentially, truncating the
label of $u$ and $v$ and possibly appending a label obtained from the non-efficient
scheme. We now describe the details of this procedure.

Let $u'$ and $v'$ denote the nearest interesting ancestor of $u$ and $v$, respectively.
We would like to find the nearest common ancestor $w$ of $u'$ and $v'$. Note that,
by construction, $w$ must be an interesting node. Thus, we can use the 
minor-universal tree to obtain its label. However, the minor-universal tree does
not allow us to answer a query efficiently by itself. Thus, we preprocess all such queries in a table $\nca[x][y]$, where $x$ and $y$
are labels consisting of at most $2+c\log(a\cdot b)$ bits, and every entry also consists of at most $2+c\log(a\cdot b)$ bits
(to facilitate constant-time access, $\nca$ is stored as $(2+c\log(a\cdot b))^{2}$ separate tables
of the same size, one for each possible combination of $|x|$ and $|y|$, and every entry
is encoded with Elias $\gamma$ code~\cite{E75} and stored in a field of length $2(2+c\log(a\cdot b))$).
This lookup table is shared between all steps of the recursion. Now, if $w\notin \{u',v'\}$
then $w$ is the sought nearest common ancestor. Its label can be obtained by
truncating the label of, say, $u$, and appending a field storing the label of $w$
in the minor-universal tree. We also need to update the rank/select structure.
This can be also done by truncating and does not require adding a new integer to
the set, because we do not store the length of the last field explicitly.
Hence, the label of $w$ can be obtained in constant time with the standard
word-RAM operations. Otherwise, assume without losing the generality that $w=v'$.
If $w=u'$ also holds, then we look at the second field of both labels (if there
is none in one of them then again $w$ is the sought nearest common ancestor).
If there are equal then the nearest big ancestor of $u$ and $v$ is the same,
and should be returned as the nearest common ancestor. Its label can
be obtained by truncating the label of either $u$ or $v$.
Otherwise, recall that $w$ has at most one big child, and so there is a path
$v_{0}-v_{1}-\ldots v_{s}=w$ between two interesting nodes, such that
$u$ belongs to a small subtree attached to some $v_{i}$ and $v$ is in a small
subtree attached to some $v_{j}$, where $i,j\in \{1,2,\ldots,s\}$.
Because the binary codes assigned to the nodes of the path preserve the bottom-top
order, we can check whether $i<j$, $i=j$, or $i>j$. If $i>j$ ($i<j$), then $v_{i}$ ($v_{j}$)
is the sought nearest common ancestor, and its label can be obtained by truncating
the label of $u$ ($v$). Finally, if $i=j$, then the nearest common
ancestor must be $v_{i}=v_{j}$, because we know that $u$ and $v$ do not belong
to the same small subtree, and truncate the label of either $u$ or $v$.

We analyze the total length of a label. It consists of 1) the concatenated fields,
2) rank/select structure encoding the lengths of the fields, 3) a lookup table
for answering queries in the minor-universal tree. The total length of all the
fields is $L=\Oh(\log n / \log b)+c\log n$. The rank/select structure
from Lemma~\ref{lem:rank/select} is built for a set of at most $\log n / \log b$
integers from $[L]$,
and so takes $\Oh(\log n / \log b \cdot \log (L \cdot \log b / \log n))$ bits of space.
The lookup table uses $(a\cdot b)^{2c}\cdot 2(2+c\log (a\cdot b))^{3}$ bits of space, making the total length:
\[ c\log n+\Oh(\log n \cdot \log \log b / \log b + (a\cdot b)^{2c}\log^{3} b)\]
By setting $b=1/a \cdot (\log n)^{1/(4c)}$ we obtain labels of length $c\log n + o(\log n)$
and constant decoding time.

\begin{theorem}
\label{thm:transform}
Consider any class $\mathcal{T}$ of rooted trees closed under taking topological minors.
If, for any $n$, there exists a minor-universal tree of size $n^{c}$ then there exists
a labeling scheme for nearest common ancestors with labels consisting
of $c\log n+o(\log n)$ bits and constant query time.
\end{theorem}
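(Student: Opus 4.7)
The plan is to combine three ingredients: a recursive decomposition of the input tree into blocks of roughly $n/b$ nodes for a parameter $b$ to be chosen later, the minor-universal tree applied only to a small "core" of $\Oh(b)$ distinguished nodes per block, and a shared lookup table that tabulates $\nca$ queries inside a minor-universal tree of size $(ab)^c$. If $b$ is polylogarithmic in $n$, the table costs only $o(\log n)$ bits, the recursion has depth $\log n / \log b = o(\log n)$, and the core of each level is small enough that a labeling obtained directly from the minor-universal tree uses only $c\log(ab) = c\log n \cdot (1+o(1))$ bits in total across all levels, provided we can amortize carefully.

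First I would fix the decomposition. Call a node big if its subtree has size at least $n/b$, and consider the contracted tree $T'$ of big nodes. Since $T'$ has at most $b$ leaves, it also has fewer than $b$ branching nodes, so a natural set of $\Oh(b)$ \emph{interesting} nodes is: the root, branching nodes, leaves of $T'$, the big children of branching nodes, and a sprinkling of virtual children attached so that each interesting node's virtual degree reflects the total $\weight$ of small subtrees hanging off the paths below it in $T'$. Recursing on each maximal small subtree gives the full decomposition.

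The label of a node $u$ would then be a concatenation of triples of fields per recursive level: (1) a label in the minor-universal tree for the induced tree on interesting nodes, using a degree-weighted variant that charges only $c\log(ab) - \log(1+\degree)$ bits when the interesting ancestor has large virtual degree; (2) a monotone prefix code selecting which edge on the path between two interesting nodes carries the nearest big ancestor of $u$, with lengths biased by path $\weight$s; and (3) the rank of the small child subtree of that big ancestor containing $u$. Field boundaries inside the concatenation are recorded with a small rank/select structure (the one from Lemma~\ref{lem:rank/select}) so that decoding can jump to the right level in $\Oh(1)$ time, compare the core labels via the precomputed $\nca$ table, and then truncate/append to output the answer.

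The main obstacle will be controlling the total label length, since a single level could a priori contribute $\Theta(\log n)$ bits. The fix I anticipate is a potential-style amortization: if at some level the three fields together exceed the baseline $s = \Oh(1) + c\log(ab)$ by an excess $\Delta$, then a simultaneous combination of small virtual degree of the interesting ancestor (short first field), long path code, and large small-child rank is forced, and these three quantities together imply that the residual subtree shrinks by a factor $b\cdot 2^{\Delta}$ rather than just $b$. Taking the telescoping product of the shrinkage factors over all levels and bounding it by $n$ then converts the total excess into a single $\Oh(\log n / \log b)$ overhead on top of the $c\log n$ baseline. Adding the $\Oh((ab)^{2c}\polylog(ab))$ bits for the shared $\nca$ table and the $\Oh((\log n/\log b)\log\log b)$ bits for the rank/select structure, and choosing $b$ to be a small polylog of $n$ (so that $(ab)^{2c} = (\log n)^{\Oh(1)}$), all lower order terms vanish into $o(\log n)$, yielding the claimed $c\log n + o(\log n)$ bound with $\Oh(1)$ query time.
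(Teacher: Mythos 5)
Your proposal follows the paper's proof essentially step for step: the same Thorup--Zwick-style recursive decomposition into big/small nodes with $\Oh(b)$ interesting nodes (including virtual ones weighted by hanging small subtrees), the same three-field-per-level label structure with a degree-weighted code into the minor-universal tree, the same amortization lemma showing that an excess of $\Delta$ bits at one level forces a shrinkage factor of $b\cdot 2^{\Delta}$, the same shared $\nca$ lookup table of size $(ab)^{\Oh(c)}$, and the same choice of $b$ as a small power of $\log n$. This is the intended argument; no meaningful divergence.
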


%%%%%%%%%%%%%%%%%%%%%%%%%%%%%%%%%%%%%%%

\bibliographystyle{plain} 
\bibliography{biblio}

\newpage
\appendix

\section{Ordered Trees}
\label{sec:ordered}

Hrubes et al.~\cite{HrubesWY10} consider ordered binary trees and construct a minor-universal
tree of size $\Oh(n^{4})$ for ordered binary trees on $n$ nodes.
We modify their construction to obtain a smaller minor-universal tree for ordered binary
trees $B'_{n}$ as described below.

We invoke Lemma~\ref{lem:order} with $N=\lfloor(1-\alpha)n\rfloor$ to obtain a sequence
$a_{\lfloor(1-\alpha)n\rfloor}=(a(1),a(2),\ldots,a(k))$. Then, $B'_{n}$ consists of a path
$u_{1}-v_{1}-u_{2}-v_{2}-\ldots -u_{k+1}-v_{k+1}-w$. For every $i=1,2,\ldots,k$, we attach
a copy of $B'_{a(i)-1}$ as the left child of $u_{i}$, and we also attach a copy of 
$B'_{a(i)-1}$ as the right child of $v_{i}$.
Additionally, we attach a copy of $B'_{\lfloor\alpha\cdot n\rfloor}$ as the left child of $w$, and 
another copy of $B'_{\lfloor\alpha\cdot n\rfloor}$ as the right child of $w$.
By a similar argument to the one used to argue that $B_{n}$ is a minor-universal tree for all binary
trees on $n$ nodes we can show that $B'_{n}$ is a minor-universal tree for all ordered binary
trees on $n$ nodes if $\alpha\geq 0.5$. Its size can be bounded as follows:
\begin{align*}
|B'_{n}| &= 1 + 2|B'_{\lfloor\alpha\cdot n\rfloor }| + 2\sum_{i=0}^{\lfloor\log(1-\alpha)n\rfloor} 2^{i} \cdot (1+|B'_{\lfloor (1-\alpha)n/2^{i}\rfloor-1}|) 
\end{align*}
To show that $|B'_{n}| \leq n^{c}$ it is enough that the following inequality holds:
\begin{align*}
2\alpha^{c} + (1-\alpha)^{c} \cdot 2^{c} / (2^{c-1}-1) &\leq 1
\end{align*}
So it is enough that $2(A/(1+A))^{c}+(1/A)^{c}\cdot 2^{c}/(2^{c-1}-1) \leq 1$, where $A=(2^{c-1}/(2^{c-1}-1))^{1/(c-1)}$.
This can be verified to hold for $c=2.331$ by choosing $A=1.463$ and $\alpha=0.594$.

\section{Universal Trees of Young et al.}
\label{sec:universal}

To present the definition of a universal tree in the sense of Young et al.~\cite{YoungCW99}
we first need to present their original definition of two operations on trees:
\begin{description}
\item[Cutting.] Two nodes $a$ and $b$, such that $a$ is a child of $b$, are selected. The entire subtree
rooted at $a$ and the edge between $a$ to $b$ are removed.
\item[Contraction.] An internal node $b$, which has parent $a$ and a single child $c$, is selected. Node
$b$ is removed. If $c$ is internal node, the children of $c$ are made children of $a$ and $c$ is removed.
If $c$ is a leaf, it becomes a child of $a$.
\end{description}
Then, tree $T$ implements tree $T'$ if $T'$ can be obtained by applying a sequence of
cutting and contraction operations to $T$. Finally, $T$ is an $n$-universal tree if it can implement
any tree $T'$ with at most $n$ leaves and no degree-1 nodes. Notice that the degrees of the nodes
of $T'$ are not bounded in the original definition. However, for our purposes it will be
enough to consider binary trees. We want to prove a lower bound on the number of leaves
of an $n$-universal tree.

We introduce the notion of parity-preserving
minor-universal trees. We say that $T$ is a parity-preserving minor-universal tree for a class $\mathcal{T}$
of rooted trees if, for any $T'\in\mathcal{T}$, the nodes of $T'$ can be mapped to the nodes of $T$
as to preserve the NCA relationship and the parity of the depth of every node.

\begin{lemma}
\label{lem:preserving}
An $n$-universal tree is a parity-preserving minor-universal tree for binary trees on $n$ leaves and no degree-1 nodes.
\end{lemma}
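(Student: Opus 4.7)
The plan is to use the hypothesis that $T$ implements $T'$ via some sequence of cuts and contractions, and to construct the required embedding $f:T'\to T$ by tracking, throughout the reduction, how each surviving node corresponds to a node of the original $T$. I start from the identity map on $T$, delete nodes from the domain whenever a cut or contraction removes them, and only update the map on type-$2$ contractions as described below.

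First I would check that both operations preserve $\nca$ in the sense that, for any two surviving nodes $u,v$, their $\nca$ in the current tree equals their $\nca$ in $T$. Cuts are immediate, and in a contraction the removed internal node $b$ has degree $1$, so it cannot itself be the $\nca$ of two survivors, and the ancestor order among surviving nodes is unchanged. Hence the naive embedding is injective and $\nca$-preserving on surviving nodes.

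The delicate issue is parity of depth. A type-$1$ contraction removes both $b$ and $c$, shortening the ancestor chain of every surviving descendant by $2$ and preserving parity. A type-$2$ contraction removes only $b$ and flips the parity of its leaf child. However, a node that is a leaf in the current tree remains a leaf forever, and any type-$2$ contraction above such a node necessarily has that leaf itself playing the role of $c$, since $b$'s unique child in the current tree is precisely the leaf sitting below $b$. In particular no internal node of $T'$ is ever touched by a type-$2$ contraction, so the identity map is already parity-preserving on the internal nodes of $T'$. To fix parity for the leaves I reset $f(\ell):=b$ whenever a type-$2$ contraction removes $b$ and re-parents its leaf child $\ell$, i.e.\ $\ell$ inherits the identity of its just-removed parent. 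Iterating this, if $k_1(\ell)$ and $k_2(\ell)$ denote the number of type-$1$ and type-$2$ contractions above $\ell$, then at the end $f(\ell)$ is an ancestor of $\ell$ in $T$ sitting $k_2(\ell)$ levels above $\ell$, whereas the depth of $\ell$ in $T'$ is smaller than its depth in $T$ by exactly $2k_1(\ell)+k_2(\ell)$; the two differences have the same parity, so the parity of the depth of $f(\ell)$ in $T$ matches the parity of the depth of $\ell$ in $T'$.

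It remains to verify that the re-assigned $f$ is still injective and $\nca$-preserving. Both follow from the observation that each shifted image $f(\ell)$ lies on the root-to-$\ell$ path in $T$ strictly below $\nca_{T'}(\ell,v)$ for every other $v\in T'$: different leaves therefore land in disjoint subtrees of $T$, the shifted images are nodes removed during the reduction and so cannot coincide with any surviving internal node $f(v)=v$, and $\nca_T(f(\ell),f(v))=\nca_T(\ell,v)=\nca_{T'}(\ell,v)$. The main obstacle I anticipate is the bookkeeping: formally checking that the successive type-$2$ contractions above a given leaf really do consume its $T$-ancestors one at a time in the claimed order, and that the node $b$ used in a reassignment is never already a shifted image of some other leaf.
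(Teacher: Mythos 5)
Your proof has a genuine gap in the claim that the naive (identity) embedding preserves $\nca$. You argue that, since the removed node $b$ has degree $1$, it cannot be the nearest common ancestor of two survivors. But in a type-$1$ contraction the node $c$ is \emph{also} removed, and $c$ can have degree $\geq 2$ in the current tree. Its children become children of $a$, so two survivors that sat in distinct subtrees of $c$ have $\nca = c$ in the current tree but $\nca = a$ after the contraction. If the identity map sends the node of $T'$ that lands on $a$ to $a$ itself, the required condition $f(\nca_{T'}(u,v))=\nca_T(f(u),f(v))$ fails, since the right side equals $c$, not $a$. A minimal example: $T$ a path $a\!-\!b\!-\!c$ with $c$ having two leaf children $q_1,q_2$; contracting $b$ gives $T'$ consisting of $a$ with children $q_1,q_2$, and the identity map sends $\nca_{T'}(q_1,q_2)=a$ to $a$, whereas $\nca_T(q_1,q_2)=c$.

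This is precisely the case the paper's induction handles explicitly: it identifies the unique $u\in T'$ mapped to $a$, looks at the images of $u$'s two children, and if both lie in subtrees that hung off $c$ in $T$, it \emph{remaps} $u$ to $c$ (which changes depth by $2$, hence keeps parity). Your proof never remaps any internal node of $T'$; the only remapping you perform is the type-$2$ leaf shift. That leaf shift and its parity arithmetic are fine, and your observation that type-$2$ contractions only ever affect the current leaf is correct; but the internal-node case is where the real content of the lemma lies, and it is missing. An alternative, closer to the paper's first step, would be to note that a type-$2$ contraction can be replaced by cutting $c$ (with $b$ playing the role of the surviving leaf, at the same depth), after which every contraction is type-$1$ and you can argue purely about parity changes of $2$ — but you would still need the remapping of the node landing on $a$ whenever $c$ was branching.
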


\begin{proof}
We first observe that the definition of contraction can be changed as follows:
\begin{description}
\item[Contraction.] An internal node $b$, which has parent $a$ and a single child $c$, is selected. Node
$b$ is removed. The children of $c$ are made children of $a$ and $c$ is removed.
\end{description}
This is because if $c$ is a leaf, reattaching $c$ to $a$ and removing $b$ is equivalent to cutting $c$.

Let $T$ be an $n$-universal tree and consider any binary tree $T'$ on $n$ leaves and no degree-1 nodes.
By assumption, $T$ implements $T'$, so we can obtain $T'$ from $T$ by a sequence of cutting and contraction
operations. We claim that if $T$ implements $T'$ then the nodes of $T'$ can be mapped to the nodes of $T$
as to preserve the NCA relationship and the parity of the depth of every node. We prove this by induction
on the length of the sequence. If the sequence is empty, the claim is obvious. Otherwise, assume that
$T_{1}$ is obtained from $T$ by a single cutting or contraction, and by the inductive assumption
the nodes of $T'$ can be mapped to the nodes of $T_{1}$ as to preserve the NCA relationship and
the parity of the depth of every node. For cutting, the claim is also obvious, as we can use the same
mapping. For contraction, we might need to modify it. Observe that at most one node $u\in T'$ is mapped
to the node $a$. If there is no such node, or the degree of $u$ in $T'$ is 1, we are done because 
the parity of the depth of every node that appears in both $T$ and $T_{1}$ is the same and the NCA
relationship in $T$ restricted to the nodes that appear in $T_{1}$ is also identical.
Otherwise, let $v_{1}$ and $v_{2}$ be the children of $u$ in $T'$.
$v_{1}$ ($v_{2}$) is mapped to a node in the subtree rooted at a child $a_{1}$ ($a_{2}$) of $a$ in $T_{1}$. If both
$a_{1}$ and $a_{2}$ are children of $c$ in $T$ then we modify the mapping so that
$u$ is mapped to $c$ in $T$, and otherwise $u$ is mapped to the original $a$ in $T$. Mapping of other nodes of
$T'$ remains unchanged. It can be verified that the obtained mapping indeed preserves the NCA relationship
and the parity of the depth of every node. Thus, $T$ is indeed a parity-preserving minor-universal tree
for binary trees on $n$ leaves and no degree-1 nodes.
\end{proof}

We need one more definition. Let $\inner(T')$ be the set of inner nodes of a tree $T'$.
A tree $T$ is a parity-constrained minor-universal tree for a class $\mathcal{T}$
of rooted trees if, for any $T'\in \mathcal{T}$ and any assignment $c : \inner(T') \rightarrow \{0,1\}$,
the nodes of $T'$ can be mapped to the nodes of $T$ as to preserve the NCA relationship and,
for any $v\in \inner(T')$, if $c(v)=0$ then $v$ is mapped to a node at even depth and if $c(v)=1$ then $v$
is mapped to a node at odd depth. $c$ is called the parity constraint.

\begin{lemma}
\label{lem:constrained}
A $(2n-1)$-universal tree is a parity-constrained minor-universal
tree for binary trees on $n$ leaves and no degree-1 nodes.
\end{lemma}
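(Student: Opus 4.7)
My plan is to reduce the parity-constrained case to the parity-preserving case of Lemma~\ref{lem:preserving}. Given a binary tree $T'$ on $n$ leaves with no degree-1 nodes and a parity constraint $c : \inner(T') \to \{0,1\}$, I would build an auxiliary binary tree $T''$ with no degree-1 nodes and at most $2n-1$ leaves by inserting extra nodes that shift the depth parity of each internal node of $T'$ to its prescribed value. Since $T$ is $(2n-1)$-universal, Lemma~\ref{lem:preserving} then furnishes a parity-preserving NCA-preserving embedding $f : T'' \to T$, and I claim that the restriction of $f$ to the nodes of $T'$ is the desired parity-constrained embedding.

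For the construction, write $p(v)$ for the parity of $v$'s depth in $T'$, and decide top-down whether to ``flip'' the parity at each internal node $v \in T'$. For the root $r$ set the flip bit $F(r) = c(r) \oplus p(r)$; for any other internal $v$ with parent $v_p$ set $F(v) = c(v) \oplus c(v_p) \oplus 1$. A short calculation using $p(v) = p(v_p) \oplus 1$ shows that these choices make $\sum_{u \succeq v} F(u) \equiv c(v) \oplus p(v) \pmod 2$, which is exactly the offset needed to land $v$ at parity $c(v)$ in $T''$. A flip at a non-root $v$ is realized by subdividing the edge from $v_p$ to $v$ with a new internal node $w$ and attaching a fresh leaf as $w$'s second child; a flip at the root is realized by adding a dummy root above $r$ whose two children are $r$ itself and a fresh leaf. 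Each flip adds one internal node of degree~$2$ and one leaf, and there are at most $n-1$ flips (one per internal node of $T'$), so $T''$ is binary with no degree-1 nodes and has at most $n + (n-1) = 2n - 1$ leaves, exactly within the budget of a $(2n-1)$-universal tree.

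What remains is to verify that the restriction of $f$ to the nodes of $T'$ preserves NCA and meets the parity constraint. The parity condition is immediate: by construction the depth of $v \in \inner(T')$ in $T''$ differs from its depth in $T'$ by $\sum_{u \succeq v} F(u)$, which has parity $c(v) \oplus p(v)$, so $v$ sits at parity $c(v)$ in $T''$ and, since $f$ preserves depth parity, also in $T$. For NCA preservation, each auxiliary node sits strictly between an original node of $T'$ and its parent, or (in the dummy-root case) above the original root $r$; in either situation the ancestor-descendant relation among $T'$-nodes in $T''$ coincides with that in $T'$, so for any $x,y \in T'$ we have $\nca_{T''}(x,y) \in T'$ and this node equals $\nca_{T'}(x,y)$. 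I expect the main thing to be careful about is precisely this bookkeeping together with the edge case of the dummy root, but it is essentially mechanical and does not require any new ideas beyond those already present in Lemma~\ref{lem:preserving}.
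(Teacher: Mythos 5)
Your proof is correct and takes essentially the same approach as the paper: both reduce to Lemma~\ref{lem:preserving} by building $T''$ via edge subdivisions (plus a fresh leaf per subdivision) and an optional dummy root, and your flip-bit rule $F(v)=c(v)\oplus c(v_p)\oplus 1$ (with $F(r)=c(r)$) is exactly the paper's rule of inserting a node between $v_p$ and $v$ precisely when $c(v_p)=c(v)$. You have simply made explicit the parity bookkeeping that the paper dismisses as "easy to see."
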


\begin{proof}
Given a tree $T'$ on $n$ leaves and no degree-1 nodes, and an assignment $c: \inner(T') \rightarrow \{0,1\}$,
we will construct a tree $T''$ on at most $2n-1$ leaves (and also no degree-1 nodes), such that if the nodes
of $T''$ can be mapped to the nodes of $T$ as to preserve the NCA relationship and the parity of the depth
of every node, then the nodes of $T'$ can be mapped to the nodes of $T$ as to preserve the NCA relationship
and respect the parity constraint. Together with Lemma~\ref{lem:preserving}, this proves the lemma.

We transform $T'$ into $T''$ as follows. We consider all inner nodes of $T'$ in the depth-first order.
Let $r$ be the root of $T'$. If $c(r)=1$, then we create a new root $r'$, make $r$ a child of $r'$,
and attach a new leaf as another child of $r'$.
For a node $v$ with parent $u$ in $T'$, if $c(u)\neq c(v)$ then we do nothing. If $c(u)= c(v)$,
then we attach a new child $v'$ to $u$, make $v$ a child of $v'$, and attach a new leaf as another child
of $v'$. The total number of new leaves created during the process is at most the number of inner
nodes of $T'$, so the total number of leaves in $T''$ is at most $2n-1$. It is easy to see that preserving the parity of the
depth of every node of $T''$ implies respecting the parity constraint for the original nodes of $T'$,
and the NCA relationship restricted to the original nodes in $T''$ is the same as in $T'$.
\end{proof}

We are ready to proceed with the main part of the proof. Our goal is to lower bound the number of leaves
$b(n)$ in a parity-constrained minor-universal tree for binary trees on $n$ leaves and no degree-1 nodes.
By Lemma~\ref{lem:constrained}, this also implies a lower bound on the number of leaves (and thus the size)
of a universal tree in the sense of Young et al.~\cite{YoungCW99}. Because a parity-constrained minor-universal
tree is a minor-universal tree, we could simply apply Lemma~\ref{lem:lowerbinary} and conclude that
$b(n)=\Omega(n^{1.728})$. Our goal is to obtain a stronger lower bound by exploiting the parity constraint.

\begin{lemma}
\label{lem:loweryoung}
$b(n) \geq 1+2\sum_{s\geq 2} b(\lfloor n/s \rfloor).$
\end{lemma}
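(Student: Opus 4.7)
The plan is to adapt the argument of Lemma~\ref{lem:lowerbinary} with parity-aware level functions: the parity constraint splits the candidate ``branching'' nodes of $T$ into two disjoint classes, yielding the extra factor of $2$.

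For each $c\in\{0,1\}$ and each $v\in T$, I would define $s_c(v)$ to be the largest $s$ such that $T^v$ contains a subdivision of an $s$-caterpillar in which every inner node is required to map to a node of $T$ at depth parity $c$. The three properties of $s(\cdot)$ used in the proof of Lemma~\ref{lem:lowerbinary} carry over verbatim for each $s_c$: $s_c$ is non-increasing along parent-to-child edges; a degree-$1$ node shares $s_c$ with its only child; and whenever $v$ has degree $\geq 2$, some child $u$ satisfies $s_c(u)\geq s_c(v)-1$. The arguments are exactly as in Lemma~\ref{lem:lowerbinary} once one observes that the top inner node of a witnessing caterpillar either coincides with $v$ (forcing $v$ to have parity $c$) or lies inside a child's subtree.

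For each $c$ and each $s\geq 2$, let $T'_{c,s}$ be the subtree of $T$ induced by $\{v : s_c(v)\geq s\}$; by monotonicity this set is upward-closed, so $T'_{c,s}$ is a subtree rooted at the root of $T$ and inherits all depths from $T$. I would then run the standard replacement argument: take any binary $T^*$ on $\lfloor n/s\rfloor$ leaves with no degree-$1$ nodes and any parity constraint $c^*$ on its internal nodes, replace every leaf of $T^*$ by an $s$-caterpillar whose $s-1$ inner nodes all carry parity constraint $c$, and invoke parity-constrained minor-universality of $T$. Each leaf $\ell$ of $T^*$ maps to some $w_\ell$ with $s_c(w_\ell)\geq s$, every internal node of $T^*$ (being the $\nca$ of two leaves of $T^*$) maps to an ancestor of some $w_\ell$, and the subdivision paths of the $T^*$-embedding consist entirely of such ancestors, so everything stays inside $T'_{c,s}$. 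Consequently $T'_{c,s}$ is itself parity-constrained minor-universal for binary trees on $\lfloor n/s\rfloor$ leaves, hence has at least $b(\lfloor n/s\rfloor)$ leaves.

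The decisive step is the following parity observation: every leaf of $T'_{c,s}$ has depth parity $c$ in $T$. Indeed, if $v$ is such a leaf then $s_c(u)<s$ for every child $u$ of $v$, so the top inner node of any witnessing $s$-caterpillar cannot live inside any $T^u$ and must coincide with $v$, fixing $v$'s parity to be $c$. Combining this with the third property above (which shows that for a fixed $c$ a given degree-$\geq 2$ node $v$ can be a leaf of $T'_{c,s}$ for at most one value of $s$, namely $s_c(v)$), each degree-$\geq 2$ node of $T$ is a leaf of $T'_{c,s}$ for at most one pair $(c,s)$: its parity in $T$ determines $c$, and then $s$ is forced. Summing,
\[
\#\{v\in T : \degree(v)\geq 2\} \;\geq\; \sum_{c\in\{0,1\}}\sum_{s\geq 2} b(\lfloor n/s\rfloor) \;=\; 2\sum_{s\geq 2} b(\lfloor n/s\rfloor),
\]
and the standard rooted-tree identity $\#\{\text{leaves of }T\}=1+\sum_{i\geq 2}(i-1)D_i\geq 1+\#\{v:\degree(v)\geq 2\}$ yields the claim. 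The main technical obstacle I anticipate is the clean verification in the previous paragraph that restricting the $T^{**}$-embedding to $T'_{c,s}$ preserves the parity constraint $c^*$ on the internal nodes of $T^*$; this amounts to noting that depths in $T'_{c,s}$ agree with those in $T$, so any parity already satisfied in $T$ is automatically satisfied in $T'_{c,s}$.
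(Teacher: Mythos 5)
Your proposal is correct and follows essentially the same route as the paper's proof: the parity-aware caterpillar level function, the observation that a leaf of the truncated tree must have depth parity matching the chosen constraint (because the top inner node of a witnessing caterpillar must map to that leaf itself), and the conclusion that this forces the level sets to be segregated by parity, yielding the extra factor of $2$. The only cosmetic difference is in the final count: the paper fixes the minority parity $d$ up front, runs the argument once for that $d$, and doubles by the choice of $d$, whereas you run the argument for both $c\in\{0,1\}$ (with disjoint leaf sets by the parity observation) and simply add; both give the same bound.
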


\begin{proof}
Let $T$ be a parity-constrained minor-universal tree for binary trees on $n$ leaves and no degree-1 nodes.
We choose $d\in\{0,1\}$ such that at most half of nodes of degree 2 or more in $T$ is at depth congruent
to $d$ modulo 2. 

For any $s\geq 2$, we define an $s$-caterpillar and $s(v)$ for any node $v\in T$
as in the proof of Lemma~\ref{lem:lowerbinary}, except that now we require that
all inner nodes of the $s$-caterpillar should be mapped to nodes at depth congruent
to $d$ modulo 2. This changes the properties of $s(v)$ as follows:
\begin{enumerate}
\item For every child $u$ of $v$, $s(u) \leq s(v)$.
\item If the degree of $v$ is 1 then, for the unique child $u$ of $v$, $s(u)=s(v)$.
\item If the degree of $v$ is at least 2 and the depth of $v$ is not congruent to $d$ modulo 2 then, for some child $u$ of $v$, $s(u)=s(v)$.
\item If the degree of $v$ is at least 2 and the depth of $v$ is congruent to $d$ modulo 2 then, for some child $u$ of $v$, $s(u)=s(v)-1$.
\end{enumerate}
Then, for any $s\geq 2$,
we consider any binary tree on $\lfloor n/s\rfloor$ leaves and no degree-1 nodes,
and any choice of the parities for all of its inner nodes.
We replace all leaves of the original binary tree by $s$-caterpillars and require that
their inner nodes are mapped to nodes at depth congruent to $d$ modulo 2,
while for the original inner nodes the required parity remains unchanged.
By assumption, it must be possible to map the nodes of the new binary tree
to the nodes of $T$ as to preserve the NCA relationship and respect the parity
constraint. The leaves of the original binary tree must be mapped to nodes
on level at least $s$ in $T$.
As in the proof of Lemma~\ref{lem:lowerbinary},
we obtain a tree $T'$ by removing all nodes on level smaller than $s$
from $T$. From the properties of $s(v)$ it is clear that every leaf of $T'$
is on level exactly $s$ and of degree at least 2. We claim that, additionally, all leaves of $T'$ are
at depth congruent to $d$ modulo 2. This is because if a node $v\in T$
is at depth not congruent to $d$ modulo 2 then, for some child $u$ of $v$,
$s(u)=s(v)$, so in fact $v$ cannot be a leaf in $T'$. For any binary tree on $\lfloor n/s\rfloor$
leaves and no degree-1 nodes and any choice of the parities for the inner nodes,
the nodes of the binary tree can be mapped to the nodes of $T'$ as to preserve
the NCA relationship and respect the parity constraint. By lower bounding the
number of leaves in $T'$ we thus obtain that the number of degree-2 nodes on
level $s$ and at depth congruent to $d$ modulo 2 in $T$ is at least $b(\lfloor n/s\rfloor)$.
Thus, the total number of degree-2 nodes at depth congruent to $d$ modulo 2
in $T$ is
\[ \sum_{s\geq 2} b(\lfloor n/s\rfloor) \]
Finally, by the choice of $d$ the total number of degree-2 nodes is at least twice
as large, and so the total number of leaves exceeds
\[ 2\sum_{s\geq 2} b(\lfloor n/s\rfloor) \qedhere\]
\end{proof}

To extract an explicit lower bound from Lemma~\ref{lem:loweryoung}, we proceed
as in Theorem~\ref{thm:binary}. It is straightforward to verify that the same reasoning
can be used to show that, if $\zeta(c)>1.5$ then $b(n)=\Omega(n^{c})$. We verify
that $\zeta(2.185)>1.5$, and so $b(n)=\Omega(n^{2.185})$.

\section{Missing Proofs}
\label{sec:proofs}

\rankselectlemma*

\begin{proof}
The encoding is similar to Lemma 2.2 of~\cite{FGNW16}, except that we cannot use a black box predecessor structure.
Let $L=s\cdot \max\{1,\log \frac{M}{s} \}$ and the set consists of $x_{1}<x_{2}<\ldots<x_{s'}$, where $s'\leq s$.

We partition the universe $[1,M]$ into blocks of length $b=\frac{M}{s}$. The encoding starts with
$b$ encoded with the Elias $\gamma$ code~\cite{E75}. Then we store every $x_{i}\bmod b$ using
$2+\log b$ bits. The encodings of $x_{i}\bmod b$ are separated by single 1s.
This takes $\Oh(\log b+s+s\log b)=L$ bits so far. We need to also store every $y_{i}=x_{i}\bdiv b$.
We observe that $0\leq y_{1}\leq y_{2}\leq \ldots y_{s'}\leq s$. Hence, we can encode them with a bit vector
of length at most $2s$, which is a concatenation of $0^{y_{i}-y_{i-1}}1$ for $i=1,2,\ldots,s'$. The bit vector
is augmented with a select structure of Clark~\cite[Chapter~2.2]{Clark}, which uses $o(s)$ additional
bits and allows us to extract the $i^\text{th}$ bit set to 1 in constant time. This all takes $\Oh(L)$ bits
of space and allows us to decode any $x_{i}$ in constant time by extracting $x_{i}\bmod b$ and $x_{i}\bdiv b$.

To find the successor of $x$, we first compute $y=x\bdiv b$. Then, using the bit vector we can find
in constant time the maximal range of integers $x_{i},x_{i+1},\ldots,x_{j}$ such $x_{k} \bdiv b=y$ for every
$k=i,i+1,\ldots,j$. The successor can be then found by finding the successor of $x\bmod b$ among
$x_{i}\bmod b,x_{i+1}\bmod b,\ldots,x_{j}\bmod b$ and, if there is none, returning $x_{j+1}$.
To find the successor of $x\bmod b$ in the range, we use the standard method of repeating the encoding
of $x\bmod b$ separating by single 0s $(j-i+1)$ times by multiplying with an appropriate constant
(that can be computed with simple arithmetical operations in constant time, assuming that we can multiply and compute
a power of 2 in constant time), and then subtracting the obtained bit vector from
a bit vector containing the encodings of $x_{i}\bmod b,x_{i+1}\bmod b,\ldots,x_{j}\bmod b$ separated
by single 1s (that is obtained from the stored encoding with standard bitwise operations). 
The bit vectors fit in a constant number of words, and hence all operations can be implemented in constant time.

Finally, we describe how to truncate the encoding. The only problematic part is that we have used
a black box select structure. Now, we want to truncate the stored bit vector, and this might change
the additional $o(s)$ bits. We need to inspect the internals of the structure.

Recall that the structure of Clark~\cite[Chapter~2.2]{Clark} for selecting the $k^\text{th}$ occurrence of 1
partitions a bit vector of length $m$ into macroblocks by choosing every $t_{1}^\text{th}$ such occurrence, where
$t_{1}=\log m \log \log m$. We encode every macroblock separately and concatenate their encodings.
Additionally, for every $i$ we store the starting position of the $i$-th macroblock in the bit vector and the
starting position of its part of the encoding in an array using $\Oh(m/t_{1}\cdot \log m)=\Oh(m/\log\log m)$ bits.
Now consider a single macroblock and let $r$ be its length. If $r > t_{1}^{2}$, we store the position
of every 1 inside the macroblock explicitly. This is fine because there can be at most $m/t_{1}^{2}$ such blocks, so this
takes $\Oh(m/t_{1}^{2}\cdot t_{1}\cdot \log m) = \Oh(m/\log\log m)$ bits. Otherwise, we will
encode the relative position of every 1, but not explicitly. We further partition such macroblock into blocks
by choosing every $t_{2}^\text{th}$ occurrence of 1, where $t_{2}=(\log\log m)^{2}$. We encode
every block separately and concatenate their encodings, and for every $i$ store the relative starting
position of the $i$-th block (in its macroblock) and the relative starting position of its part of the encoding
(in the encoding of the macroblock) in an array using $\Oh(m/t_{2}\cdot \log \log m)$ bits (we will make sure
that the encoding of any macroblock takes only $\Oh(\polylog m)$ bits). Then, let again $r$ be length
of a block. If $r > t_{2}^{2}$, we can store the relative position of every 1 (in its macroblock) inside the block
explicitly. Otherwise, the whole block is of length less than $t_{2}^{2}< \frac{1}{2}\log m$, and we can
tabulate. In more detail, for every bit vector of length at most $\frac{1}{2}\log m$ (there are $\Oh(\sqrt{m})$
of them), we store the positions of the at most $\frac{1}{2}\log m$ 1s explicitly. This precomputed table
takes $o(m)$ bits, so can be stored as a part of the structure.
Then, given a block of length less than $\frac{1}{2}\log m$, we extract its corresponding fragment
of the bit vector using the standard bitwise operations, and use the precomputed table.

We are now ready to describe how to update the select structure after truncating the bit vector
after the $k^\text{th}$ occurrence of 1. We first determine the macroblock containing this occurrence,
say that it is the $i^\text{th}$ macroblock. We can easily discard all further macroblocks by checking where
the encoding of the $(i+1)^\text{th}$ macroblock starts and erasing everything starting from there.
We also erase the starting positions stored for all further macroblocks, and move the encoding
just after the remaining starting positions. This can be done in constant time using standard bitwise
operations. Then, we inspect the $i^\text{th}$ macroblock. If the positions of all 1s are stored
explicitly, we erase a suffix of this sequence. This is now problematic, because maybe after erasing
a suffix $r$ becomes at most $t_{2}^{2}$ and we actually need the other encoding. We overcome this difficulty
by changing the definition: a macroblock is partitioned into a prefix of length $t_{2}^{2}$ and the remaining
suffix. The occurrences of all 1s in the suffix are stored explicitly, and we also store the number of occurrences
in the prefix. Then, the prefix is partitioned into blocks by choosing every $t_{2}^\text{th}$ occurrence. 
To truncate the prefix, we need to completely erase a suffix of blocks, which can be done in constant
time, and modify the last remaining block. If the encoding of the last block consists of explicitly
stored relative positions, we just need to erase its suffix, which again can be done in constant time.
Otherwise, there is actually nothing to do. Additionally, we need to make sure that the precomputed
table does not have to be modified. To this end, instead of tabulating every bit vector of length at
most $\frac{1}{2}\log m$, we tabulate every bit vector of length at most $\frac{1}{2}\log s$
(instead of $\frac{1}{2}\log m$).
\end{proof}

\end{document}